\newif\iflong
\newcommand{\mypar}[1]{\medskip\noindent\textbf{#1. }}
\newcommand{\alg}[1]{\textsc{#1}\xspace}
\theoremstyle{plain}
\newtheorem{observation}[theorem]{Observation}
\newcommand{\hide}[1]{}
\newcommand{\whp}[1][\empty]{\ensuremath{\text{w.h.p.}\ifthenelse{\equal{#1}{\empty}}{}{(#1)}}}
\newcommand{\Whp}[1][\empty]{\ensuremath{\text{W.h.p.}\ifthenelse{\equal{#1}{\empty}}{}{(#1)}}}
\renewcommand{\Pr}{\mathbb{P}}
\DeclareMathOperator{\E}{\mathbb{E}}
\DeclareMathOperator*{\argmax}{arg\,max}
\newcommand{\FullOrShort}{short}
  \newcommand{\fullOnly}[1]{#1}
  \newcommand{\shortOnly}[1]{}
    \newcommand{\fullOnly}[1]{}
    \newcommand{\shortOnly}[1]{#1}
\newcounter{tempAppCounter}
\newcounter{tempSecNumber}
\newcommand{\setAppCounter}[1]{
  \setcounter{tempAppCounter}{\arabic{theorem}}
  \setcounter{tempSecNumber}{\arabic{section}}

  \setcounter{theorem}{\arabic{ctr:#1}}
  \setcounter{section}{\arabic{ctrsec:#1}}
  \renewcommand{\thetheorem}{\arabic{section}.\arabic{theorem}}
}
\newcommand{\renewAppCounter}{
  \setcounter{theorem}{\arabic{tempAppCounter}}
  \setcounter{section}{\arabic{tempSecNumber}}
  \def\thetheorem{\oldtheorem}
}
\newcounter{note}[section]
\title{Distributed Algorithms for Minimum Degree Spanning Trees}
\author{Michael Dinitz}{Department of Computer Science, Johns Hopkins University, Baltimore, MD, USA.}{mdinitz@cs.jhu.edu}{}{Supported in part by NSF awards 1464239 and 1535887.}%mandatory, please use full name; only 1 author per \author macro; first two parameters are mandatory, other parameters can be empty.
\author{Magn\'us M. Halld\'orsson}{ICE-TCS, School of Computer Science, Reykjavik University, Iceland}{mmh@ru.is}{https://orcid.org/0000-0002-5774-8437}{Supported by grants nos.~152679-05 and 174484-05 from the Icelandic Research Fund. 
Also supported by the Research Institute for Mathematical Sciences, a Joint
Usage/Research Center located in Kyoto University.}
\author{Calvin Newport}{Department of Computer Science, Georgetown University, Washington, DC, USA}{cnewport@cs.georgetown.edu}{}{Supported in part by NSF awards 1733842 and 1649484.}
\authorrunning{M. Dinitz, M.\,M. Halld\'orsson and C. Newport}%mandatory. First: Use abbreviated first/middle names. Second (only in severe cases): Use first author plus 'et al.'
\subjclass{Theory of computation $\rightarrow$ Distributed algorithms}% mandatory: Please choose ACM 2012 classifications from https://www.acm.org/publications/class-2012 or https://dl.acm.org/ccs/ccs_flat.cfm . E.g., cite as "General and reference $\rightarrow$ General literature" or \ccsdesc[100]{General and reference~General literature}. 
\keywords{spanning trees, distributed algorithms}%mandatory
\begin{document}

\maketitle

\begin{abstract}
The {\em minimum degree spanning tree} (MDST) problem
 requires the construction of a spanning tree $T$ for graph $G=(V,E)$ with $n$ vertices, such that the maximum degree $d$ of $T$
 is the smallest among all spanning trees of $G$.
 In this paper, we present two new distributed approximation algorithms for the MDST problem.
 Our first result is a randomized distributed algorithm that constructs a spanning tree of maximum degree $\hat d = O(d\log{n})$.
 It requires $O((D + \sqrt{n}) \log^2 n)$ rounds (w.h.p.), where $D$ is the graph diameter,
 which matches (within log factors) the optimal round complexity for the related minimum spanning tree problem.
 Our second result refines this approximation factor by constructing a tree with maximum degree $\hat d = O(d + \log{n})$,
 though at the cost of additional polylogarithmic factors in the round complexity. 
 Although efficient approximation algorithms for the MDST problem have been known in the sequential setting since the 1990's,
 our results are first efficient distributed solutions for this problem. 
 \end{abstract}

%%%%%%%%%%%%%%%%%%%%%%%%%%%%%%%%%%%%%%%%%%%%%%%%%%%%%%%%%%%
%%%%%%%%%%%%%%%%%%%%%%%%%%%%%%%%%%%%%%%%%%%%%%%%%%%%%%%%%%%

\section{Introduction \& Related Work}
\label{sec:intro}

We present two new distributed approximation algorithms for the {\em minimum degree spanning tree} (MDST) problem,
which requires the construction of a spanning tree $T$ for graph $G=(V,E)$ with $n$ vertices, such that the maximum degree of $T$
 is the smallest among all spanning trees of $G$.
 As argued in~\cite{fr92,fr94}, in addition to their theoretical interestingness,
 these trees are particularly useful in network communication scenarios in which 
 low-degree backbones reduce routing overhead.
 
In the {\em sequential} setting, the problem is easily shown to be NP-hard (by reduction from the Hamiltonian path problem).
The best known approximation is due to F\"{u}rer and Raghavachari~\cite{fr94},
who provide a polynomial-time algorithm that constructs a tree with maximum degree $d+1$,
where $d$ is the minimum maximum degree over all spanning trees in the graph. 
To the best of our knowledge, 
there exist no efficient {\em distributed} approximation algorithm for the MDST problem. 

This paper addresses this gap.
In more detail, we present two new distributed approximation algorithms for the MDST problem.
Our first algorithm guarantees a spanning tree with a maximum degree in $O(d\log{n})$
and a round complexity that is comparable to the optimal solutions to the related {\em minimum spanning tree} problem.
Our second algorithm guarantees a maximum degree in $O(d+\log{n})$, but at the cost of 
extra polylogarithmic factors in the round complexity.

%\bigskip

 {\bf Model.}
The results discussed in this paper (both our own and previous work) assume the CONGEST model of distributed computation.
In this model, 
the network is described as an $n$-node graph $G=(V,E)$, with a computational process assigned to each node
and the edges representing communication channels.
Time proceeds in synchronous rounds.
In each round, each node can send a $O(\log{n})$-bit message to each of its neighbors in the graph.
Our new results actually work in the harder {\em broadcast} variation of the CONGEST model (broadcast-CONGEST)
in which nodes
must broadcast the {\em same} message to all of their neighbors in a given round.

%\bigskip

 {\bf Background.}
The construction of spanning trees with useful properties is one the primary topics in the study of distributed graph algorithms.
The most well-studied problem in this area is the {\em minimum spanning tree} (MST) problem,
which requires the construction of a spanning tree that minimizes the sum of edge weights.
We will briefly summarize the relevant related work on the MST problem,
as it provides a template for the progression of results on the MDST problem studied in this paper.

In the 1980's, Gallager, Humblet, and Spira~\cite{28} help instigate this area
with a distributed algorithm that constructs an MST in $O(n\log{n})$ rounds (similar
ideas appeared in a 1926 paper by Boruvka~\cite{58} that was not translated into English until more recently).
A series of follow up papers~\cite{15,27,7} improved this complexity to $O(n)$ rounds, 
which is worst-case optimal in the sense that $\Omega(n)$ rounds are required in certain graphs with diameter $D=\Theta(n)$.

Garay, Kutten and Peleg~\cite{29}  isolated the graph diameter $D$ as a distinct parameter, enabling further progress.
They described a distributed MST algorithm that solves the problem in $O(D+n^{0.61})$ rounds,
which is sub-linear for graphs with sub-linear diameters.
This result was subsequently improved to $O(D+ \sqrt{n}\log^{*}{n})$ rounds~\cite{44} .
A series of lower bound results~\cite{64,23,17} established that any non-trivial approximation of an 
MST requires  $\Omega(D + \sqrt{n/\log{n}})$ rounds, even in graphs with small diameters.

To date, the MDST problem has been primarily studied in the context of sequential algorithms. 
In 1990, F\"{u}rer and Raghavachari~\cite{fr90} describe a polynomial time algorithm that constructs a tree with a maximum degree in $O(d\log{n})$
(recall that $d$ is the maximum degree of the optimal tree).\footnote{The result in~\cite{fr90} actually proves that finding a $\log{n}$-approximation is
in $NC$. All such solutions, however, can be simulated in polynomial time by a sequential algorithm, yielding the claimed polynomial-time result.} 
Agrawal, Klein and Ravi~\cite{akr} subsequently generalized this result to the Steiner tree variation
of the MDST problem.
F\"{u}rer and Raghavachari improved both results by presenting algorithms that guarantee a maximum degree  of $d+1$ for both the standard~\cite{fr92} and Steiner
tree~\cite{fr94} versions of the problem. Given that finding a spanning tree with maximum degree exactly $d$ is NP-hard,
these approximations are likely the best possible that can be achieved in polynomial time.

To the best of our knowledge, the first connection of the MDST problem to the distributed setting was made by Blin and Butelle~\cite{BB03},
who observed that the general strategy from~\cite{fr92} translates easily to distributed models.
This holds because the main mechanism in the sequential algorithm from~\cite{fr92} is a series of iterative improvements to an initial spanning tree,
in which each improvement reduces the degree of some high-degree node. 
Blin and Butelle note that each iteration of this search can be implemented with a small number of distributed broadcast and convergecasts
in distributed models with restricted message size (e.g., such as  CONGEST).

As with the original distributed solutions to the MST problem,
the distributed variation of~\cite{fr92} proposed in~\cite{BB03} requires $\Omega(n)$ rounds.\footnote{Isolating a specific round complexity claim from~\cite{BB03}
is complicated by the fact that they consider a different, pseudo-asynchronous model. Roughly speaking, however, directly implementing~\cite{fr92}
in the CONGEST model, using a BFS tree to implement the search, requires $O((\Delta-d)nD)$ rounds in the worst case, where $\Delta$ is the maximum degree of the
original graph.}
A key open question is whether  distributed solutions to the MDST problem
can follow the general trajectory of the MST, and
refine their efficiency to something closer to $\tilde{O}(D + \sqrt{n})$---which we suspect (though do not prove)
to be a lower bound for the MDST problem.

 {\bf Result \#1: Logarithmic Approximation.}
Our first algorithm constructs a spanning tree with maximum degree $\hat d = O(d\log{n})$ in $\tilde{O}(D + \sqrt{n})$ rounds,
with high probability in $n$ (w.h.p.).
This round complexity matches (within log factors) optimal solutions to the related MST problem.

Whereas the sequential algorithm from~\cite{fr92} begins with an arbitrary tree,
and then iteratively reduces its maximum degree, our distributed strategy begins with a forest of small trees,
and then carefully merges them in such a way that no individual node's degree grows too large.

In more detail, the algorithm proceeds in phases.
The input to each phase is a forest that covers the entire graph.
The goal of the phase is to combine enough of the trees in the forest to reduce their number by a constant factor,
while adding no more than $d$ new adjacent edges to any individual node.
These guarantees result in a single spanning tree after at most $p=O(\log{n})$ phases,
with the maximum degree of any individual node bounded by $p\cdot d = O(d\log{n})$.

Each phase proceeds in two steps.
During the first step, the algorithm computes a distributed maximal matching over the component graph defined by the forest.
If two components $C_i$ and $C_j$ are matched, they combine into one larger component.
This might not result in enough components merging, though, so in order to make more progress in the second step we consider the bipartite graph with left nodes corresponding to unmatched components and right nodes corresponding to low-degree vertices.  We prove that the existence of a degree-$d$ spanning tree implies that this bipartite graph must contain a subgraph in which the left nodes all have degree $1$ and the right node all have degree at most $d$, i.e., a $(1,d)$-matching.  So we find a maximal $(1,d)$-matching, which by standard arguments has size at least $1/2$ of the maximum, and so which results in a set of component merges including at least half of the remaining components.  

Since in every phase a constant fraction of the components are involved in a merge, there can be only $O(\log n)$ phase.  And in every phase, every node has its degree increased by at most $d+1$ (step 1 increases degrees by at most $1$, while step 2 increases degrees by at most $d$).  This gives the desired $O(\log n)$-approximation.

Implementing the above graph logic with efficient distributed primitives in the broadcast variation of the CONGEST model provides its own challenges.
For example: generalizing distributed matching strategies to execute over graph components (instead of single nodes),
and implementing intra-component communication without excess latency or congestion (a task which requires the treatment of small and large components to differ.)
Through careful optimization we are able to implement each of our $O(\log{n})$ phases in at most ${O}((D+\sqrt{n})\log{n})$ rounds.

%\bigskip

 {\bf Result \#2: Refined Approximation.}
As obvious place to seek improvement on our first algorithm is in the magnitude of its approximation factor.
Whereas this algorithm constructs a spanning tree with maximum degree $\hat d = O(d\log{n})$,
the best known sequential algorithm achieves $\hat d = d+1$. 
Our second result aims to reduce this gap.
We present an algorithm that constructs a spanning tree with maximum degree $\hat d = O(d + \log{n})$.
To achieve this factor, however, requires a larger polylogarithmic factor in the round complexity
and a substantially more involved algorithm.

At a high level, the basic idea of this second algorithm is to attempt to parallelize a large number of the style of iterative 
improvements used in the original sequential solutions~\cite{fr92}.
Whereas the sequential algorithm improves the tree one edge at a time,
our second result enables many nodes to make large improvements to their degrees in a short period of time.  Since this algorithm is essentially a local search algorithm, its running time depends on the quality of the initial solution, and by using the output of our first algorithm as input to his algorithm we are able to save a logarithmic factor in the running time.

%%%%%%%%%%%%%%%%%%%%%%%

\section{Logarithmic Approximation}
\label{bcongest}

We describe and analyze an algorithm called \alg{MatchingMDST} (as in: {\em matching-based minimum degree spanning tree}).
Our goal is to prove the following:

\begin{theorem} \label{thm:bcast-aMDST}
With high probability in $n$: 
\alg{MatchingMDST} produces a spanning tree $T$ with maximum degree $\hat d = O(d\log{n})$
in $O\left((D + \sqrt{n}) \log^2{n}\right)$ rounds,
when executed in the broadcast-CONGEST model in a connected network graph of size $n>0$ and 
diameter $D$ that contains a spanning tree with maximum degree $d$.
\end{theorem}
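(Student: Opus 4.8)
The plan is to decompose the argument into a purely combinatorial part---each phase shrinks the number of forest components by a constant factor while increasing every vertex's degree by only $O(d)$---and a distributed-implementation part---each phase can be run in $O((D+\sqrt n)\log n)$ rounds in broadcast-CONGEST. Starting from the trivial forest of $n$ singleton trees, I would prove by induction on the phase index $i$ that after phase $i$ the maintained subgraph is a spanning forest with at most $n/2^{i}$ components in which every vertex has degree at most $i(d+1)$. The degree bound follows from per-phase accounting: step~1 is a matching on the component graph, hence adds at most one edge at any vertex; step~2 is a $(1,d)$-matching, so any vertex designated on the right receives at most $d$ new edges and any component contributes exactly one connecting edge. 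Acyclicity is preserved because matched pairs of trees are joined by a single edge, and in step~2 each unmatched component is attached by one edge to the component of the vertex it is matched to; since a maximal step-1 matching leaves the unmatched components pairwise non-adjacent in $G$, no cycle arises even if several of them attach to the same vertex. Once the forest collapses to a single component---after $p=\lceil\log_2 n\rceil$ phases by the component-count bound---we obtain a spanning tree of maximum degree $\hat d\le p(d+1)=O(d\log n)$.

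The heart of the combinatorial part is the $(1,d)$-matching lemma. Let $T^{*}$ be a spanning tree of $G$ of maximum degree $d$. I claim the bipartite graph $B$ with the unmatched components on the left and the designated right vertices (which include all vertices lying in matched components) on the right contains a subgraph in which every left vertex has degree exactly $1$ and every right vertex has degree at most $d$. By the defect form of Hall's theorem with right-vertex capacities $d$, it is enough to show $|N_{B}(S)|\ge |S|/d$ for every set $S$ of unmatched components. Consider the edges of $T^{*}$ with exactly one endpoint inside $\bigcup_{C\in S}C$: each $C\in S$ contributes at least one such edge (otherwise $T^{*}$ would be disconnected), and since distinct members of $S$ are pairwise non-adjacent in $G$ these edge sets are disjoint, so there are at least $|S|$ of them; their outer endpoints lie in matched components and hence are right vertices of $B$, and each such vertex $v$ absorbs at most $\deg_{T^{*}}(v)\le d$ of these edges, so they reach at least $|S|/d$ distinct vertices of $N_{B}(S)$. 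Because a \emph{maximal} $(1,d)$-matching $M$ can fail to cover a left vertex only when all of its neighbors are saturated (have $M$-degree $d$), the set $U$ of uncovered left vertices satisfies: the $d\cdot|N_{B}(U)|$ edges of $M$ incident to $N_{B}(U)$ land on $d\cdot|N_{B}(U)|\ge|U|$ distinct covered left vertices, so $M$ covers at least half the unmatched components. Writing $k$ for the current number of components and $k_{1}$ for the number of step-1 matched pairs, we have $k-2k_{1}$ unmatched components, at least half of which merge in step~2, so the new number of components is at most $k-k_{1}-(k-2k_{1})/2=k/2$, giving the claimed halving and hence $p=O(\log n)$ phases.

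For the round complexity I would keep one global BFS tree of $G$ (built once, $O(D)$ rounds) and, within each phase, a leader and a spanning tree inside every component. Both steps are executed by \emph{simulating} a standard $O(\log n)$-round randomized algorithm with each component playing the role of a single virtual node---Luby/Israeli--Itai-style maximal matching for step~1 and its $b$-matching analogue for step~2. Each simulated round requires every component to exchange $O(\log n)$ bits of state with its component-graph neighbors and within itself; this is where $D+\sqrt n$ appears, via the small/large-component dichotomy familiar from $\tilde O(D+\sqrt n)$ MST algorithms: a component of size at most $\sqrt n$ aggregates and disseminates along its tree in $O(\sqrt n)$ rounds, while the fewer than $\sqrt n$ components of size larger than $\sqrt n$ route their short messages over the global BFS tree, which with pipelining costs $O(D+\sqrt n)$ rounds for all of them at once; cross-component messages are handled the same way once each leader has learned its incident cross-edges. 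Thus one simulated round costs $O(D+\sqrt n)$ real rounds, a step costs $O((D+\sqrt n)\log n)$ rounds (w.h.p., since the simulated matching algorithms terminate in $O(\log n)$ rounds w.h.p.), and the $O(\log n)$ phases together cost $O((D+\sqrt n)\log^{2}n)$ rounds, as claimed; all of this stays within broadcast-CONGEST because each virtual-node message is $O(\log n)$ bits and the BFS-tree and component-tree schedules are pipelined to avoid congestion. I expect this last layer to be the main obstacle---generalizing the matching subroutines to operate over components and, in particular, controlling congestion when many small components must communicate across shared boundary edges of $G$---whereas the two combinatorial lemmas above are comparatively routine. (We assume $d$ is given to the algorithm; otherwise a candidate value is maintained and doubled whenever a required $(1,d)$-matching fails to exist.)
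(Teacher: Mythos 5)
Your proposal is correct and follows essentially the same route as the paper: alternate a maximal component matching with a maximal $(1,d)$-component matching on the unmatched components, use the degree-$d$ spanning tree to show a $(1,d)$-matching covering all unmatched components exists (so a maximal one covers half), charge $d+1$ degree increase per phase over $O(\log n)$ phases, and implement each simulated round in $O(D+\sqrt n)$ time via the small/large-component dichotomy with pipelining on a global BFS tree. The only cosmetic difference is that you derive the existence of the covering $(1,d)$-matching through a capacitated defect-Hall condition, whereas the paper exhibits it directly by picking, for each unmatched component, one edge of the optimal tree leaving it.
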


To clarify the core ideas of the \alg{MatchingMDST} algorithm, we divide the description into four parts.
We begin in Section~\ref{sec:logn:1} by defining the types of matchings our algorithm uses to iteratively create our spanning tree.
Then in Section~\ref{sec:logn:2}, we define \alg{MatchingMDST} and analyze its correctness under the assumption
that its subroutines function correctly.
In Section~\ref{sec:logn:3}, 
we describe and analyze the low-level primitives used by \alg{MatchingMDST} (and the matching subroutines it calls) to efficiently disseminate information
within the components maintained by our algorithm.
Finally, in Section~\ref{sec:logn:4}, we describe and analyze the matching subroutines themselves.

\subsection{Matching Preliminaries}
\label{sec:logn:1}

For the following definitions and lemmas, we fix a graph $G = (V, E)$ with diameter $D$ with $n = |V|$.

\begin{definition} \label{def:component-matching}
Let $\mathcal P = \{C_1, C_2, \dots, C_k\}$ such that $C_i \subseteq V$ for all $i \in [k]$ and $C_i \cap C_j = \emptyset$ for all $i,j \in [k]$ with $i \neq j$.  Then $E' \subseteq E$ is a \emph{component matching} of $\mathcal P$ if for every $\{u,v\} \in E'$ there exists $i,j \in [k]$ with $i \neq j$ such that $u \in C_i$ and $v \in C_j$, and for every $i \in [k]$ there is at most one edge in $E'$ with an endpoint in $C_i$.
\end{definition}

Intuitively, a component matching is just a matching in the ``component graph" which has a vertex for each $C_i \in \mathcal P$ and an edge between $C_i$ and $C_j$ if there is an edge between the two components in $G$.  
In order to speed up our algorithm, we will want to also generalize this concept to $d$-matchings (where every node can have degree up to $d$), but restricted to a particular bipartite structure that will prove useful to our analysis.

\begin{definition} \label{def:component d-matching}
Let $U = \{C_1, C_2, \dots, C_k\}$ be a collection of disjoint sets of vertices.  Let $Q \subseteq V$ be a collection of vertices.
A \emph{$(1,d)$-component matching} of $(U, Q)$ is a collection of edges $E' \subseteq E$ such that every edge in $E'$ has one endpoint in $Q$ and the other endpoint in some $C_i \in U$, every vertex in $Q$ is incident on at most $d$ edges of $E'$, and for every $C_i \in U$ there is at most one edge of $E'$ with an endpoint in $C_i$.
\end{definition}

Similar to component matchings, the intuition behind a $(1,d)$-component matching is that if we look at the bipartite graph which has one vertex for each $C_i \in U$ on the left side and the vertices of $Q$ on the right side, with $C_i \in U$ adjacent to $v \in Q$ if there is an edge $\{u,v\} \in E$ with $u \in C_i$, then we are looking for a subgraph in which every left vertex (component in $U$) has degree at most $1$ and every right vertex (vertex in $Q$) has degree at most $d$.

A useful property of traditional matchings is that any {\em maximal} matching is at most a factor of $2$ smaller than the {\em maximum} matching on the same graph.
It is straightforward to prove that this same property holds for both component matchings and $(1,d)$-component matchings.

\begin{lemma} \label{lem:maximal-approx}
Any maximal component matching has size at least $1/2$ the size of any component matching, and any maximal $(1,d)$-component matching has size at least $1/2$ the size of any $(1,d)$-component matching.
\end{lemma}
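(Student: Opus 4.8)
The plan is to mimic the textbook proof that a maximal matching is within a factor of $2$ of the maximum, recast in terms of ``blocking features''. In each case I fix the given maximal matching $M$ and an arbitrary matching $M^*$ of the same type (in particular $M^*$ may be taken to be a maximum one, but maximality of $M^*$ is never used), charge every edge of $M^*$ to a feature of $M$ that prevented it from being added to $M$, bound how many edges of $M^*$ can land on the same feature, and multiply by the number of such features in $M$. In both cases this yields $|M^*|\le 2|M|$, which is the assertion of the lemma.

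For ordinary component matchings, let $M$ be a maximal component matching of $\mathcal P$ and $M^*$ any component matching. Call $C_i\in\mathcal P$ \emph{$M$-touched} if some edge of $M$ has an endpoint in $C_i$. Maximality of $M$ says exactly that for every edge $\{u,v\}\in E$ with $u\in C_i$, $v\in C_j$, $i\ne j$, at least one of $C_i,C_j$ is $M$-touched, since otherwise $M\cup\{\{u,v\}\}$ would be a strictly larger component matching. Applying this to each edge of $M^*$, I charge that edge to an $M$-touched set among its two endpoints. Because $M^*$ is a component matching, each $C_i$ carries at most one edge of $M^*$, so this charging is injective into the $M$-touched sets; and each edge of $M$ touches exactly two sets of $\mathcal P$, so there are at most $2|M|$ of them. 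Hence $|M^*|\le 2|M|$.

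The $(1,d)$ case needs slightly more careful bookkeeping, and it is the step I would flag as the real obstacle. Let $M$ be a maximal $(1,d)$-component matching of $(U,Q)$ and $M^*$ any $(1,d)$-component matching. Say $C_i\in U$ is \emph{$M$-matched} if it carries an edge of $M$ (at most one, by definition), and $v\in Q$ is \emph{$M$-full} if it carries exactly $d$ edges of $M$. Maximality now reads: for every edge $\{u,v\}\in E$ with $v\in Q$, $u\in C_i\in U$, either $C_i$ is $M$-matched or $v$ is $M$-full. I split $M^* = M^*_1 \sqcup M^*_2$, where $M^*_1$ is the set of edges of $M^*$ whose $U$-endpoint lies in an $M$-matched component; by maximality every edge of $M^*_2$ then has an $M$-full $Q$-endpoint. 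For $M^*_1$: distinct edges lie in distinct components (each component carries at most one $M^*$-edge), and the number of $M$-matched components is exactly $|M|$ since each edge of $M$ meets exactly one component, so $|M^*_1|\le|M|$. For $M^*_2$: there are at most $|M|/d$ $M$-full vertices (each accounts for $d$ distinct edges of $M$, and these edge sets are disjoint across $Q$-vertices because an edge has a unique $Q$-endpoint), and each $M$-full vertex carries at most $d$ edges of $M^*$, so $|M^*_2|\le (|M|/d)\cdot d = |M|$. Adding, $|M^*|\le 2|M|$.

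The reason the $(1,d)$ case is the delicate one is that the lazy charging --- sending each edge of $M^*$ to a single blocking edge of $M$ --- fails: an edge of $M$ that shares an $M$-full $Q$-vertex with $d$ edges of $M^*$ and also shares its component with one further edge of $M^*$ would be blamed $d+1$ times, giving only the useless bound $|M^*|\le(d+1)|M|$. Partitioning $M^*$ into its ``component-blocked'' and ``vertex-blocked'' parts and, on the latter, counting $M$-full \emph{vertices} (scaled by $d$) rather than charging individual edges is precisely what brings the bound back down to $2|M|$. Everything else reduces to two multiplicity facts that are immediate from the definitions --- an edge of a $(1,d)$-component matching meets exactly one component, and distinct vertices of $Q$ have disjoint incident-edge sets --- so I anticipate no further difficulty.
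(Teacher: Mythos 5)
Your proof is correct and follows essentially the same route as the paper: the standard blocking/charging argument, with each edge of $M^*$ attributed either to an already-matched component or to a saturated $Q$-vertex, and the two contributions bounded by $|M|$ each. The paper phrases this as charging each edge of $M^*\setminus M$ at most the corresponding degree in $M$ rather than splitting $M^*$ into two parts, but the accounting is identical.
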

\iflong
\begin{proof}
Let $M$ be a maximal component matching of $\mathcal P$, and let $M^*$ be an arbitrary component matching of $\mathcal P$.  Consider some edge $e \in M^* \setminus M$.  It cannot be added to $M$, so at least one of its endpoints is in a component which already has an incident edge in $M$.  Charge $e$ to this component (if both endpoints are in such components, choose one arbitrarily).  Then since $M^*$ is a component matching, every component in $\mathcal P$ gets charged at most its degree in $M \setminus M^*$.  Thus $|M^* \setminus M| \leq 2|M \setminus M^*|$, and so $|M^*| = |M^* \cap M| + |M^* \setminus M| \leq |M^* \cap M| + 2 |M \setminus M^*| \leq 2|M^*|$.

Similarly, let $M$ be a maximal $(1,d)$-component matching of $(U, Q)$, and let $M^*$ be an arbitrary $(1,d)$-component matching of $(U, Q)$.  Consider some $e \in M^* \setminus M$.  It cannot be added to $M$, so either its $U$ endpoint is in a component which already has an edge in $M \setminus M^*$ or its $Q$ endpoint has degree $d$ in $M$ (or both).  In the former case we charge this edge to the component containing its $U$ endpoint, and in the latter case we charge it to its $Q$ endpoint.  Clearly no component or vertex gets charged more than its degree in $M \setminus M^*$, and hence we know that $|M^* \setminus M| \leq 2|M \setminus M^*|$.  Thus $|M^*| \leq |M^* \cap M| + |M^* \setminus M| \leq |M^* \cap M| + 2|M \setminus M^*| \leq 2|M|$.   
\end{proof}
\fi

\subsection{The \alg{MatchingMDST} Algorithm}
\label{sec:logn:2}

We now present and analyze our main algorithm executed on a connected network $G=(V,E)$.
In the following, we assume that nodes know the optimal value $d$ (the minimum maximum degree over all spanning trees in the graph).
Below, we will show this assumption holds without loss of generality.

We call each iteration $i$ of the main {\bf for} loop  {\em phase $i$} of the algorithm.
During each phase $i$,
the \alg{MatchingMDST} algorithm calls three subroutines:
\alg{Component-Matching}($\mathcal P_i$),  \alg{d-CM}($U_i, Q_i$), and
\alg{Component-Merge}$(M_i, M'_i)$.
The first subroutine constructs a component matching $M_i$ over $\mathcal P_i$,
while the second constructs a $(1,d)$-component matching $M'_i$ over $U_i$ and $Q_i$.
The \alg{Component-Merge}$(M_i, M'_i)$ subroutine performs some low-level communication (described later)
that allows nodes to efficiently learn whether their component merged with other components
by the addition of edges in $M_i$ and $M'_i$ to the forest maintained by the algorithm.
The two matching subroutines run for a fixed round length in $\Theta((D + \sqrt{n})\log{n})$,
while the merge subroutine runs for $\Theta(D+\sqrt{n})$ rounds.
These fixed lengths
allow nodes to remain synchronized during their execution of \alg{MatchingMDST}.

In this section, 
we will analyze \alg{MatchingMDST} under the assumption that these subroutines work correctly.
In particular, we will assume that the matching subroutines always return the correct type of matching,
and with high probability the matching is also {\em maximal}.
In subsequent sections, we will describe and analyze our implementations of these subroutines,
and prove they work correctly with the required probabilities.

\begin{algorithm}[H]
\label{alg:MatchingMDST}
\begin{algorithmic}[1]                    % enter the algorithmic environment
 \STATE $E_1 := \emptyset$
 \FOR{$i := 1$ \TO $c \log n$} 
  \STATE Let $\mathcal P_i$ be the connected components of $G[E_i]$
  \STATE $M_i := $ \alg{Component-Matching($\mathcal P_i$)}
  \STATE $U_i := \{C \in \mathcal P_i : e \cap C = \emptyset \  \forall e \in M_i\}$ \COMMENT{i.e., components not touched by $M_i$}
  \STATE  $Q_i := \{v \in V : v \not\in \cup_{C \in U_i} C \land \{u,v\} \in E \text{ for some $u \in C$ with $C \in U_i$}\}$ \COMMENT{i.e., vertices not in any $U_i$ component that are adjacent to at least one $U_i$ component}
  \STATE $M'_i := $ \alg{d-CM}($U_i, Q_i$) 
  \STATE $E_{i+1} := E_i \cup M_i \cup M'_i$
  \STATE \alg{Component-Merge}$(M_i, M'_i)$
 \ENDFOR
\RETURN $E' := E_{c \log n}$
\end{algorithmic}
\caption{MatchingMDST}
\end{algorithm}

At a high level, in every iteration of the algorithm we seek to make progress by adding edges which will merge components: if we can reduce the number of components by a constant factor in each iteration, then after $O(\log n)$ iterations we will be left with a spanning tree.  A natural approach is to add matchings, but in order to get running time which is independent of $d$ we need to do slightly more.  We first construct $M_i$, which is intuitively a maximal matching between the components (i.e., a maximal matching in the graph obtained by contracting all of the components of the current subgraph).  This might not include enough components to make significant progress, though, so in the remaining components we try to find a subgraph which has degree at most $d$ and merges a significant number of the remaining components.  This is $M'_i$, which is a maximal $(1,d)$-component matching.  Based on the existence of the optimal (but unknown) spanning tree of degree $d$, we can show that such a maximal subgraph actually touches many of the component nodes, and thus makes progress by merging many of them.

We now analyze this algorithm under the assumption that the subroutines work correctly (as described above).
We begin with a useful property regarding the number of components merged in each phase.

\begin{lemma} \label{lem:large-d-matching}
$|M'_i| \geq  |U_i| / 2$ with high probability.
\end{lemma}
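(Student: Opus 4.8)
The plan is to strip away the randomness first and then prove a purely combinatorial existence statement via a Hall-type argument against the (unknown) optimal tree. The only randomness in this lemma comes from whether the matching subroutines return \emph{maximal} matchings: by the standing assumption each call always returns a valid matching and returns a maximal one with high probability, so a union bound lets me condition on the event $\mathcal E$ that $M_i$ is a maximal component matching of $\mathcal P_i$ \emph{and} $M'_i$ is a maximal $(1,d)$-component matching of $(U_i,Q_i)$. Under $\mathcal E$, Lemma~\ref{lem:maximal-approx} reduces the claim to showing that there \emph{exists} a $(1,d)$-component matching of $(U_i,Q_i)$ of size $|U_i|$ --- this is the largest size conceivable, since each $C\in U_i$ contributes at most one edge, so together with Lemma~\ref{lem:maximal-approx} it yields $|M'_i|\ge |U_i|/2$. (I may assume $\mathcal P_i$ has at least two components; otherwise $G[E_i]$ is already a spanning tree and the phase is vacuous.)

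The structural pivot is a consequence of $M_i$ being maximal: no two distinct components $C,C'\in U_i$ can be joined by an edge of $G$, since such an edge could be added to $M_i$, contradicting maximality. Hence $U_i$ is an independent set in the component graph, and therefore for every $S\subseteq U_i$, every edge of $G$ with exactly one endpoint in $W_S:=\bigcup_{C\in S}C$ has its other endpoint outside $\bigcup_{C\in U_i}C$ and adjacent to a $U_i$-component, i.e.\ in $Q_i$. In other words $Q_i$ contains the entire boundary of every union of $U_i$-components; this is exactly what keeps the subsequent counting from leaking out of $Q_i$.

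For the existence claim, let $T^*$ be a spanning tree of $G$ of maximum degree $d$, and build the bipartite graph $B$ whose left side is $U_i$ and whose right side consists of $d$ copies $v^{(1)},\dots,v^{(d)}$ of each $v\in Q_i$, with $C$ adjacent to every copy of $v$ whenever some $u\in C$ has $\{u,v\}\in E$. A left-saturating matching of $B$ corresponds exactly to a $(1,d)$-component matching of $(U_i,Q_i)$ of size $|U_i|$, so it suffices to verify Hall's condition. Fix $\emptyset\ne S\subseteq U_i$; then $W_S\subsetneq V$, so contracting each $C\in S$ inside $T^*$ yields a tree in which each of the $|S|$ super-vertices has degree at least $1$, and no two super-vertices are adjacent (independence of $U_i$). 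Hence at least $|S|$ edges of $T^*$ leave $W_S$, and by the previous paragraph they all land in $Q_i$; since each $v\in Q_i$ has $\deg_{T^*}(v)\le d$, these edges reach at least $|S|/d$ distinct vertices of $Q_i$, so the $B$-neighborhood of $S$ has size at least $d\cdot(|S|/d)=|S|$. Hall's theorem then gives the required matching, hence a $(1,d)$-component matching of size $|U_i|$, which combined with the first step gives $|M'_i|\ge|U_i|/2$ under $\mathcal E$, i.e.\ with high probability.

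The main obstacle --- really the one idea that makes the whole thing work --- is recognizing and using the independence of $U_i$ (a consequence of the maximality of $M_i$): without it, an edge leaving $W_S$ could run into another untouched component and escape $Q_i$ entirely, and the counting collapses. Once that is in place, the contracted-tree edge count and the $d$-fold vertex-splitting reformulation of the degree-$d$ Hall condition are routine; the only other care needed is to keep the one-line probabilistic step cleanly separated from the otherwise deterministic argument.
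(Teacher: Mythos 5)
Your proof is correct, and its skeleton is the same as the paper's: condition on both matchings being maximal, observe that maximality of $M_i$ makes $U_i$ an independent set in the component graph (so every edge leaving a $U_i$-component lands in $Q_i$), use the optimal degree-$d$ spanning tree to exhibit a $(1,d)$-component matching of size $|U_i|$, and finish with the factor-$2$ guarantee of Lemma~\ref{lem:maximal-approx}. The one place you diverge is in how you exhibit that size-$|U_i|$ matching: you split each $v\in Q_i$ into $d$ copies and verify Hall's condition over all subsets $S\subseteq U_i$ via an edge count in the contracted tree. This works, but it is more machinery than the statement needs. The paper simply picks, for each $C\in U_i$, one edge of $T^*$ leaving $C$ (which exists by connectivity and must land in $Q_i$ by your independence observation); the resulting edge set is a subset of $E(T^*)$, so the degree-at-most-$d$ constraint on $Q_i$ is inherited for free from $T^*$ having maximum degree $d$, and no Hall-type argument is required. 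Your explicit handling of the degenerate single-component case and the clean separation of the probabilistic conditioning are both fine and, if anything, slightly more careful than the paper's write-up.
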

\begin{proof}
With high probability, the matching subroutines return maximal matchings.
Under this assumption,
we first note
that no two components in $U_i$ are adjacent to each other in $G$ (or else $M_i$ would not have been maximal).
Therefore, every component in $U_i$ is adjacent to at least one node in $Q_i$ (or else $G$ would not be connected).  
Let $T$ be an arbitrary spanning tree of $G$ with maximum degree $d$, and let $B \subseteq T$ be the edges of $T$ that have one endpoint in $Q_i$ and the other in a component in $U_i$.  Since $T$ is connected, and has maximum degree $d$, for every component $C \in U_i$ there is at least one edge in $B$ with one endpoint in $C$ and one endpoint in $Q_i$.  
For each $C \in U_i$, select some such edge from $B$ arbitrarily, to create $B' \subseteq B$.  By construction, $B'$ is clearly a $(1,d)$-component matching of size $|U_i|$.  Thus by Lemma~\ref{lem:maximal-approx} we get that $|M'_i| \geq |U_i| / 2$.  
\end{proof}

We now prove that our algorithm efficiently produces a tree with the required degree bound.

\begin{lemma} \label{lem:fast-phases}
With high probability: 
 \alg{MatchingMDST} returns a spanning tree with maximum degree $\hat d = O(d\log{n})$.
\end{lemma}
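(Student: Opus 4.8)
The plan is to prove the two halves of the statement essentially separately: that $G[E']$ is a spanning tree, and that its maximum degree is $O(d\log n)$. The degree bound will be deterministic, relying only on the structural guarantee that the matching subroutines return the correct \emph{type} of matching; the fact that $E'$ is a single spanning tree is where the high-probability content of Lemma~\ref{lem:large-d-matching} enters, via a union bound over the $c\log n$ phases.

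\emph{Degree bound.} I would show that every node gains at most $d+1$ incident edges during a single phase $i$. The component matching $M_i$ is in particular a matching of $G$, so it adds at most one edge at each node. For $M'_i$: a node is either a vertex of $Q_i$, in which case it is incident on at most $d$ edges of $M'_i$ by the definition of a $(1,d)$-component matching; or a vertex lying in some $C\in U_i$, in which case $C$ has at most one incident $M'_i$-edge and hence so does the node; or neither, in which case it gains nothing. Summing over the $c\log n$ phases, every node finishes with degree at most $(d+1)c\log n$; since $n\ge 2$ forces $d\ge 1$, this is $O(d\log n)$.

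\emph{Spanning tree.} First I would argue by induction that each $G[E_i]$ is a forest: $E_1=\emptyset$ is one; given that $G[E_i]$ is a forest, $M_i$ acts as a matching in the component graph of $G[E_i]$, which cannot close a cycle, so $G[E_i\cup M_i]$ is a forest; and then in the component graph of $G[E_i\cup M_i]$ every $C\in U_i$ is incident on at most one $M'_i$-edge while \emph{every} edge of $M'_i$ has an endpoint inside some $U_i$-component, so the $U_i$-components are leaves in the subgraph of $M'_i$-edges and no cycle can use an $M'_i$-edge, whence $G[E_{i+1}]$ is a forest too. In particular every added edge strictly reduces the component count, so $|\mathcal P_{i+1}| = |\mathcal P_i| - |M_i| - |M'_i|$. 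Now I would invoke the progress bound: conditioning on the event (which holds with high probability, by a union bound over the $c\log n$ phases) that every phase's matchings are maximal, Lemma~\ref{lem:large-d-matching} applies, and since $M_i$ touches exactly $2|M_i|$ distinct components we have $|U_i| = |\mathcal P_i| - 2|M_i|$, so $|\mathcal P_{i+1}| = |\mathcal P_i| - |M_i| - |M'_i| \le |\mathcal P_i| - |M_i| - (|\mathcal P_i| - 2|M_i|)/2 = |\mathcal P_i|/2$. As $|\mathcal P_1| = n$ and each $|\mathcal P_{i+1}|$ is an integer that is at most half of $|\mathcal P_i|$, after at most $\lceil\log_2 n\rceil$ phases only one component remains; choosing the constant $c$ large enough, $G[E']$ is connected, and being acyclic it is a spanning tree.

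I expect the main obstacle to be the acyclicity/progress step, in particular checking carefully that $M_i$ and $M'_i$ \emph{together} never close a cycle (so that the component count really drops by exactly $|M_i|+|M'_i|$ each phase) --- the bipartite $(1,d)$ structure, which forces every $M'_i$-edge to touch a $U_i$-component and each such component to be touched at most once, is exactly what makes this go through, but it needs to be spelled out rather than asserted. The probabilistic part is routine: the per-phase failure probability of the matching subroutines is $1/\poly{n}$, so a union bound over the $O(\log n)$ phases still leaves everything holding with high probability.
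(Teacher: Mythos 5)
Your proof is correct and follows essentially the same route as the paper's: induction to show $G[E_i]$ remains a forest, Lemma~\ref{lem:large-d-matching} plus a union bound over the $O(\log n)$ phases for connectivity, and the per-phase increment of at most $d+1$ for the degree bound. Your component-count accounting ($|\mathcal P_{i+1}| = |\mathcal P_i| - |M_i| - |M'_i|$ combined with $|U_i| = |\mathcal P_i| - 2|M_i|$) is in fact slightly cleaner than the paper's and yields the tighter contraction factor $1/2$ rather than $3/4$, and you spell out the degree argument that the paper's proof leaves implicit (covering it only in the overview).
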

\begin{proof}
We first prove that \alg{MatchingMDST} always maintains a forest.
In more detail, we prove by induction that $G[E_i]$ is a forest for all $i$.  This is clearly true for $i=1$, since $E_1 = \emptyset$.  Suppose that it is true for some $i$, so we want to show that adding $M_i$ and $M'_i$ to $E_i$ does not result in any cycles.  By definition, $M_i$ is a matching between the connected components of $G[E_i]$, so adding it cannot create any cycles.  When we add $M'_i$, we are adding at most one edge from each component untouched by $M_i$ to a component that was touched by $M_i$, and thus we also do not create any cycles.  It follows that $G[E_{i+1}]$ is a forest.
Our matching routines are always guaranteed to return a matching. The only property that holds probabilistically is their maximality. Therefore, this above observation about maintaining a forest is deterministic.

We now prove that with high probability,   
 $G[E_j]$ has only one component for some $j=O(\log{n})$.  
 Lemma~\ref{lem:large-d-matching} implies that in phase $i$, with high probability at least half of the components in $U_i$ take part in $M'_i$ and thus are joined with at least one other component.
 By definition of $U_i$, any component not in $U_i$ merged during the first matching.  It follows that $|\mathcal P_{i+1}| \leq |M_i| + (|U_i|/2) + \frac12 (|U_i|/2) \leq \frac12 |\mathcal P_i \setminus U_i| + \frac34 |U_i| \leq \frac34 |\mathcal P_i|$ with high probability.  Therefore, after $j = c\log{n}$ phases of  Lemma~\ref{lem:large-d-matching} holding (for appropriate constant $c$),
 we arrive at a single component.
 By a union bound, this lemma holds for the first $j$ phases with high probability.
 \end{proof}

We conclude by noting that under our assumption regarding the correctness  and fixed round complexities of the subroutines,
 Theorem~\ref{thm:bcast-aMDST} follows directly from Lemma~\ref{lem:fast-phases} and the $O(\log n)$ phases of \alg{MatchingMDST}.

\bigskip
{\bf Knowledge of $d$.} \label{sec:knowledge-d} Since the algorithm does not know $d$, it needs to try the values $2, 4, \ldots, n$ in sequence. A value $\hat{d}$ for $d$ succeeds if the maximal $(1, \hat{d})$-component matching actually matched at least half the components in $U_i$, and otherwise it fails.  This can be detected and disseminated using the global BFS tree and aggregation/dissemination strategies discussed in the next section.
If a given estimate $\hat d$ fails, we know that it was too low, so we need not consider it ever again. That is, in the next phase, we continue with the last value of $\hat{d}$ that succeeded.
Therefore, over  $O(\log n)$ phases, we will compute at most $\log n$ total (1, $\hat{d}$)-component matchings that are unsuccessful. This does not impact our asymptotic time complexity.

\subsection{Component Primitives}
\label{sec:logn:3}

Both \alg{MatchingMDST} and the matching subroutines it calls  require the ability to disseminate information within components. 
We implement these abilities with three component primitives: \alg{Component-Broadcast} (which broadcasts a single message throughout a component), 
\alg{Component-Max} (which calculates a max function on values held by nodes in a component), and \alg{Component-Merge} (which updates nodes
within newly merged components, ensuring that at the beginning of each phase, each component has a unique leader, and all nodes in the component
know both this leader and the component size).
The first two primitives are used in both matching subroutines,
while the merge primitive is called at the end of each phase of the \alg{MatchingMDST} algorithm.

In this section we describe the guarantees and implementation details of these  primitives.
All three are deterministic and have a worst case round complexity of at most some $r_{max} = O(D + \sqrt{n})$.

\subsubsection{Preliminaries and Invariants}

Our component primitives maintain the following invariant:
at the beginning of each phase $i$ of \alg{MatchingMDST} (i.e., iteration $i$ of the {\bf for} loop), for each component $C\in \mathcal P_i$:
(a) each $C$ has a unique leader node $ID(C)\in C$; (b) all nodes in $C$ know $ID(C)$; and (c) all nodes in $C$ known $|C|$.
This invariant is trivially satisfied at the beginning of the first phase as all components consist of a single node.
The goal of the \alg{Component-Merge} subroutine called at the end of each phase is to disseminate the appropriate information
to guarantee that the invariant will hold at the beginning of the next phase.

We also assume that at the beginning of the execution nodes construct a BFS $T$ tree over all nodes in the network. Let $u_0$ be the root of this tree.
Using standard synchronous BFS algorithms, this setup requires $O(D)$ rounds. We will use this same tree $T$ throughout the execution.
Without loss of generality, we may assume that each node $u$ knows the height $H(T)$ of the tree as well as level in which $u$ appears.

\textbf{Small and Large Components.}
To ensure efficient round complexities for our broadcast and max primitives, 
we treat {\em small} components (less than $\sqrt{n}$ nodes) differently than {\em large} components (at least $\sqrt{n}$ nodes).  Note that there can be at most $\sqrt{n}$ large components.
Our above invariant ensures that at the beginning of each phase,
each node knows whether it is in a small or large component.

Communication within small components is generally straightforward as we can use a breadth-first tree defined over the component to
efficiently broadcast and convergecast using standard methods.
Large components, by contrast, rely on the global tree $T$.
The key in analyzing the large component primitives will be proving that congestion on $T$ is tractable.

For simplicity, we assume during the execution of these primitives that we run the small component implementations
during even rounds and the large component implementations during odd rounds, preventing interference between the two.

\subsubsection{The \alg{Component-Broadcast} Primitive}
The goal of this primitive is to disseminate a single message through each component: when \alg{Component-Broadcast} is called, we assume at most one node in each component $C\in \mathcal P_i$ has
a message to  disseminate to all nodes in $C$. The primitive disseminates this message to all nodes in $C$.\footnote{Our algorithm
never calls this primitive with more than one node in a component attempting to disseminate a message. For specification completeness,
however, we note that if this primitive is called with {\em multiple} messages within a given component,
our implementation guarantees that each node receives at least one of these messages.}

{\bf Small Components.} This primitive is easy to implement in small components.
Fix some small component $C$. 
Assume some $u\in C$ has a message $m$ to broadcast.
Node $u$ can simply initiate a message flood of $m$ throughout $C$,
where nodes ignore messages broadcast from other components when executing the flood.
This flood requires time $D(C)$, where $D(C)$ is the 
diameter of $C$. Because $C$ is connected and contains less than $\sqrt{n}$ nodes,
we know $D(C) \leq \sqrt{n}$.

{\bf Large Components.}
Large components must share the global tree $T$ to disseminate their messages.
They to so in two steps.
During the first step, nodes route the component messages up $T$ to the root $u_0$.
In each round, each node can send at most one new message to its parent.
A standard pipelining argument, however, establishes that the root will receive all messages within at most 
$H(T) + m$ rounds, where $H(T)$ is the height of $T$ and $m$ is the number of messages.
Because $H(T) \leq D$ and $m \leq \sqrt{n}$ (because there are at most $\sqrt{n}$ large components),
this requires $D+ \sqrt{n}$ total rounds.

At this point, $u_0$ knows all $m$ messages. 
It can disseminate them through $T$ in additional $H(T) + m \leq D+\sqrt{n}$ rounds by pipelining $m$ broadcast waves
down the tree.

\subsubsection{The \alg{Component-Max} Primitive}
This primitive assumes that some subset (perhaps all) of the nodes in each component possess a comparable value of size $O(\log{n})$ bits.
The goal is to compute and disseminate a max function over these values in each component.

{\bf Small Components.}
In each small component $C$,
 the leader $ID(C)$ can execute a standard BFS-based convergecast among nodes in $C$.
 That is, it can initiate a flood that defines a BFS tree in $C$, then the nodes convergecast their values back up to the tree to $ID(C)$.
 This requires $O(D(C))$ rounds, where $D(C) \leq \sqrt{n})$ is the diameter of $C$.

{\bf Large Components.}
Convergecasting is more complicated in large components as potentially multiple such components are using the same global tree $T$ for this purpose,
creating congestion.
\iflong
The first step in our strategy is for each leader of a large component to broadcast its id to all nodes in large components.
We can implement this step in $O(D + \sqrt{n})$ rounds using the \alg{Component-Broadcast} primitive implementation for large components described above. 
This follows because the specific implementation described above goes beyond the specification of the component broadcast
problem to deliver each component's message to {\em all} nodes in the network.

Once all nodes in large components know the complete set of large components, 
the second step is to execute a synchronized convergecast of values from different components over $T$.
This step is easier to describe and analyze if we assume every leaf node in $T$ is at the same depth $H(T)$ (where $H(T)$ is the height/maximum depth of the tree).
If this is not the case, each leaf node $u$ with depth $d(u) < H(T)$ can locally simulate $H(T) - d(u)$ descendants arranged in a line.
Let $T'$ be this resulting tree, made up of real and simulated nodes, that has all leaves at the same depth $H(T)$.

To execute our convergecast, we start every leaf in $T'$ with one token for each of the large components.
Each token is a message that contains the component's leader ID as well as a {\em payload} that holds a value to be convergecast.
The nodes agree on some fixed ordering of these tokens.
They initiate a convergecast up $T'$ for these tokens one by one; i.e., starting the convergecast for the first token in round $1$,
starting the convergecast for the second token in round $2$, and so on. 

For each leaf node $u$ and large component $C$, if $u$ (or the node simulating $u$) is in component $C$ and has a value to convergecast, it puts
its value in the payload for its component $C$ token. Otherwise, it leaves a NIL placeholder in that position.
For each non-leaf node $v$, all tokens for a given large component $C$ will arrive at $v$ during the same round.
Node $v$ calculates the max value among all of these incoming tokens,
as well as its own value (in the case that it is participating in component $C$), and puts this max in the payload of the token for $C$
that it sends to its parent at the start of the next round.

The root $u_0$ of $T$ will receive the convergecast values for all large components after at most $H(T) + n_L$
rounds, where $n_L$ is the number of large components. 
Because $H(T) \leq D$ and $n_L \leq \sqrt{n}$, 
this requires at most $D+\sqrt{n}$ rounds.
At this point, $u_0$ can broadcast all $n_L$ values back down the tree in an additional $D+\sqrt{n}$ rounds
as in the \alg{Component-Broadcast} primitive.
\else
As before, let $m \leq \sqrt{n}$ be the number of large components.  We defer details to the full version, but at a high level we first inform all nodes in large components of the ids of \emph{all} $m$ large components in $O(D + m) = O(D + \sqrt{n})$ rounds.  With knowledge of these ids, we can schedule and pipeline the $m$ convergecasts up to the root in $O(D + \sqrt{n})$ rounds.  The root can then broadcast the $m$ maximum values back out in an additional $O(D  +\sqrt{n})$ rounds.
\fi

\subsubsection{The \alg{Component-Merge} Primitive}
This primitive is called at the end of each phase of \alg{MatchingMDST},
after new edges have been selected to be added to the spanning tree.
Each edge connects two previously separate components, requiring them to merge.
The goal of this primitive is to ensure that our component primitive invariants are satisfied after this component merging.
In more detail, for each newly merged component,
we must select a single new leader and ensure all nodes learn this leader and the new component size.

Recall that each phase of \alg{MatchingMDST} executes two matching subroutines.
We handle edges identified by each matching separately. 

\iflong

{\bf Merges from First Matching.} Let $(u,v)$ be an edge added by the first matching. 
This edge requires components $C(u)$ and $C(v)$ to merge.
By the definition of a component matching, these are the only edges adjacent to $C(u)$ or $C(v)$ added by this first matching.
The first step in completing this merge is to select a new leader.
To do so, $u$ can send $v$ the ID of its leader ($ID(C(u))$) and the size of $C(u)$,
and $v$ can send $u$ the ID of its leader $(ID(C(v)))$ and the size of $C(v)$.
Assume that $ID(C(u)) > ID(C(v))$ (the other case is symmetric).
The primitive will elect $ID(C(u))$ to be the leader of the combined component.
Both $u$ and $v$ can send the new leader ID and new component size to all nodes in $C(u)$ and $C(v)$ (respectively), using an instance of \alg{Component-Broadcast}.

{\bf Merges from Second Matching.}
Now consider an edge added by the second matching.
This case is more complicated as the edges included in this matching might
enable many components to merge into a single component.
The details of this second matching, however, provide some useful structure that will aid our merge operations.

In particular, the components participating in this matching are divided into two sets,
which we will call here ${\cal A}$ and ${\cal B}$.
The \alg{d-CM} routine guarantees the follow properties of edges included in the matching it produces:
(1) each edge must have one endpoint in an ${\cal A}$ component and another in a ${\cal B}$ component;
(2) each ${\cal A}$ component contains at most one node that is an endpoint in a matched edge.

Fix some component $C\in {\cal B}$ that must merge with a set ${\cal S}\subseteq {\cal A}$ of components from ${\cal A}$.
Our default rule is that the components in ${\cal S}$ adopt the the leader of component $C$ (i.e., $ID(C)$).
To implement this, we note that for each $C'\in {\cal S}$, there is an edge $(u,v)$ included in the matching
with $u\in C'$ and $V\in C$. Node $v$ knows that its component is in ${\cal B}$,
so it can the ID of its component to $u$, and $u$ can disseminate this through $C'$ using
an instance of \alg{Component-Broadcast}.

At this point, we must also calculate and disseminate the new size of this newly merged component.
To do so, each node in $C$ which is adjacent to at least one other component in $\mathcal S$ in the matching (and at most $d$ such components, since it is a $(1,d)$-matching) can ask its counterparts in ${\cal S}$ for the size of its component.
We can then sum these sizes by running a variation of \alg{Component-Max} in $C$ for these values,
where we replace the max function with the sum operator (the key observation here is that our convergecast strategy
works the same with summing
values as it does for finding the maximum).
It follows that all nodes in $C$ learn the total size the newly merged component (by adding this sum to the size of $C$).
Each endpoint in the matching can pass this information to their counterpart in ${\cal S}$,
which can spread it using another instance of \alg{Component-Broadcast}.

The total cost of executing the merge for edges from both types of matchings
is a constant number of local broadcasts,
and a constant number \alg{Component-Broadcast} and \alg{Component-Max} instances.
The total round complexity is therefore $O(D + \sqrt{n})$.
\else
We defer the details to the full version, but it is not hard to see that merges from the first matching are easy to handle since every merge involves exactly two components, so it is straightforward for them to exchange the appropriate information using \alg{Component-Broadcast}.  Merges from the second matching are more difficult, since this matching can result in many components merging into one.  To handle this, we take advantage of the special structure of this matching: every merge can be thought of as a single component from $\mathcal P_i \setminus U_i$ being merged with some components in $U_i$.  So the single component from $\mathcal P_i \setminus U_i$ can serve as the ``center" and coordinate the dissemination of the new size and leader using \alg{Component-Max} (or a variation in which the max operator is replaced by the sum) and \alg{Component-Broadcast} respectively.  
\fi

\subsection{Matchings Subroutines}
\label{sec:logn:4}

We now describe and analyze the two component matching subroutines called by \alg{MatchingMDST}.
These subroutines also make use of the communication primitives (and the invariants regarding component leaders and sizes maintained by these primitives) discussed in Section~\ref{sec:logn:3}.

\subsubsection{The \alg{Component-Matching} Subroutine}

The \alg{Component-Matching} subroutine
modifies the classical maximal matching algorithm of Israeli and Itai~\cite{II86}.  
As a reminder, 
at a high level, the Israeli and Itai algorithm works as follows  \footnote{We note that although their algorithm is stated for simple graphs, it works equally well in multigraphs. It suffices to revise the definition of 
``good edges'' appropriately.}: 

\begin{itemize}
  \item Stage 1: Each node selects a random incident edge
and proposes it to the other endpoint.
% (Implemented by assigning random priority values to incident edges).
\item Stage 2: Each node that receives a proposal selects a random proposal. 
%(Also by assigning random priorities.)
\item Stage 3: The set of accepted proposals (or chosen edges) induces a graph of degree 2. 
Each node chooses a random incident proposal (either one it proposed, or one it accepted) and tells the other endpoint. If that endpoint also chose that edges, it is included in the matching.
\end{itemize}

We now show how to modify this algorithm to still be efficient when the vertices are actually components, not just nodes.  
Given a collection $\mathcal P = \{C_1, C_2, \dots, C_k\}$ of disjoint sets (components) of vertices, let $C(u)$ denote the cluster containing $u$ for all $u \in \cup_{i=1}^k C_i$.  Consider the following algorithm \alg{Component-Matching}:

\begin{algorithm}[H]
\label{alg:component-matching}
\begin{algorithmic}[1]                    % enter the algorithmic environment
 \STATE $U := [k]$
 \STATE $M := \emptyset$
 \FOR{$i := 1$ \TO $c \log n$}
 \STATE \texttt{//Stage 1}
 \STATE Every $u$ in each component $C_i$ with $i \in U$ assigns a random priority value in $[n^3]$ to each edge from $u$ to a different component.  Let $r_u$ be the maximum of these priority values, corresponding to edge $e_u$.
  \STATE Run \alg{Component-Max($\{C_i : i \in U\}$)} with values $r_u$ to find the highest priority edge leaving each remaining component.  For component $C_i$, let this edge be $e_u$ where $u \in C_i$.
  \STATE $u$ sends a proposal to the other endpoint of $e_u$. 
  \STATE \texttt{//Stage 2}
  \STATE Every node receiving a proposal assigns each received proposal a random priority in $[n^3]$.  If $u$ is such a node, let $p_u$ denote the largest of these priority values.
  \STATE Run \alg{Component-Max($\{C_i : i \in U\}$)} with values $p_u$ (if $p_u$ not defined, set it to $-\infty$ first) to find the highest priority incoming proposal in each remaining component. For component $c_i$, let the edge corresponding to this proposal be between $v_i \in C_i$ and $u_i \not\in C_i$
  \STATE $v_i$ sends an ``accept" message to $u_i$
  \STATE \texttt{//Stage 3}
  \STATE If $u$ receives an ``accept" message from $e_u$, use \alg{Component-Broadcast} to send this to the leader of $C(u)$
  \STATE The leader of each cluster $C_i$ now knows whether $C_i$ sent a proposal which was accepted and whether $C_i$ accepted a proposal from another cluster.  If only one of the two, let $e_i$ be this edge.  If both, the leader chooses one of the two edges at random to be $e_i$.  The leader broadcasts the identity of this edge to all of $C_i$ using \alg{Component-Broadcast}.
  \STATE The endpoint $u_i$ of $e_i$ that is inside $C_i$ sends a commit message to the other endpoint of $e_i$.  If the other endpoint also sends a commit message to $u_i$, then we add $e$ to $M$, send this message to all of $C_i$ using \alg{Component-Broadcast}, and remove $i$ from $U$.  
 \ENDFOR
 \RETURN $M$
\end{algorithmic}
\caption{Component-Matching($\mathcal P = \{C_1, \dots, C_k\}$)}
\end{algorithm}

We now analyze this subroutine:

\begin{lemma} \label{lem:maximal-component-matching}
Let $G = (V, E)$, and let $\mathcal P = \{C_1, C_2, \dots, C_k\}$ be a collection of disjoint sets of vertices (components) such that $G[C_i]$ is connected for all $i \in [k]$.  The \alg{Component-Matching($\mathcal P$)} subroutine terminates in $O((D+\sqrt{n}) \log n)$ rounds and returns a component matching of $\mathcal P$.
With high probability, this matching is maximal.
\end{lemma}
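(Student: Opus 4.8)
The plan is to prove the three claims of the lemma separately: the $O((D+\sqrt{n})\log n)$ round bound, the (deterministic) fact that $M$ is always a component matching, and that w.h.p.\ $M$ is maximal. The first two are bookkeeping; the real content is maximality, which I would obtain by arguing that \alg{Component-Matching} is a faithful simulation of the Israeli--Itai randomized maximal matching algorithm on the ``component multigraph'', and then appealing to its analysis.

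For the round bound: each of the $c\log n$ iterations of the {\bf for} loop makes two calls to \alg{Component-Max} and a constant number of calls to \alg{Component-Broadcast}, plus $O(1)$ rounds of direct communication along selected edges. By Section~\ref{sec:logn:3}, each primitive call takes $O(D+\sqrt{n})$ rounds, so each iteration takes $O(D+\sqrt{n})$ rounds and the whole subroutine takes $O((D+\sqrt{n})\log n)$ rounds, independent of the randomness. For the component-matching property: priorities are assigned only to edges leaving a component, so every proposed, accepted, or committed edge joins two \emph{distinct} components; within a single iteration each component $C_i$ commits to at most one edge $e_i$, and an edge is placed into $M$ only when \emph{both} of its endpoints commit to it, so a component becomes incident to at most one new edge of $M$ per iteration; and the moment $C_i$ contributes an edge to $M$, the index $i$ is deleted from $U$ and $C_i$ never participates again. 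Thus at all times each component is incident to at most one edge of $M$ and every edge of $M$ joins distinct components, i.e.\ $M$ is a component matching --- regardless of the coin flips.

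For maximality, let $\mathcal H$ be the multigraph on vertex set $\mathcal P$ with one edge $\{C_i,C_j\}$ for each edge of $G$ having its two endpoints in distinct components $C_i, C_j$; note that $\mathcal H$ has at most $\poly{n}$ edges, and that a component matching of $\mathcal P$ is maximal exactly when the corresponding matching of $\mathcal H$ is maximal (an addable edge is precisely an $\mathcal H$-edge between two currently unmatched components). I would then verify, stage by stage, that one iteration of \alg{Component-Matching} realizes one iteration of the (multigraph version of the) Israeli--Itai algorithm~\cite{II86} run on $\mathcal H$: in Stage~1, since each edge of $G$ leaving a component gets an i.i.d.\ uniform priority in $[n^3]$ and \alg{Component-Max} reports the maximum to the whole component, every surviving component proposes a uniformly random incident $\mathcal H$-edge (ties do not occur w.h.p.); in Stage~2 each component picks a uniformly random received proposal; and in Stage~3 the accepted edges form a subgraph of $\mathcal H$ of maximum degree $2$, each component picks a uniformly random incident accepted edge, and an edge enters $M$ iff it is picked by both of its endpoints. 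Given this simulation, the Israeli--Itai analysis carries over: in each iteration the expected number of surviving $\mathcal H$-edges incident to a newly matched vertex is at least a constant fraction of the surviving edges, so by Markov's inequality each iteration shrinks the surviving edge set by a constant factor with at least constant probability, conditioned on the history; a Chernoff bound for this stochastically dominated sequence of success indicators --- over the $c\log n$ iterations, for a large enough constant $c$, starting from at most $\poly{n}$ edges --- shows that w.h.p.\ no $\mathcal H$-edge survives between two unmatched components, which is exactly the statement that $M$ is maximal. A union bound over the $\poly{n}$ pairwise priority comparisons rules out ties w.h.p.\ and justifies the range $[n^3]$ (up to the value of the constant).

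The main obstacle is making the simulation claim airtight: because the vertices of $\mathcal H$ are super-nodes and $\mathcal H$ is a multigraph, one has to check that Stage~1's per-node/per-edge priority scheme genuinely realizes a uniform choice among a component's incident $\mathcal H$-edges, and that the \alg{Component-Max}/\alg{Component-Broadcast} bookkeeping lets the (unique, w.h.p.) winning endpoint node recognize it won and issue the proposal --- and similarly for the accept and commit handshakes in Stages~2 and~3, which must be arranged so that the induced random choices on $\mathcal H$ have precisely the Israeli--Itai distribution. Once the simulation is pinned down, maximality follows from~\cite{II86}; the remaining work --- converting ``a constant fraction of the edges dies in expectation each iteration'' into a w.h.p.\ bound over $O(\log n)$ iterations --- is standard.
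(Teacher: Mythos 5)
Your proposal is correct and follows essentially the same route as the paper: the round bound comes from $O(\log n)$ iterations each costing $O(D+\sqrt{n})$ via the component primitives, the matching property is deterministic by construction, and maximality is obtained by observing that the subroutine faithfully simulates the Israeli--Itai algorithm on the contracted component (multi)graph and invoking its $O(\log n)$-iteration w.h.p.\ analysis. The only difference is that you spell out the simulation fidelity and the concentration argument that the paper simply defers to~\cite{II86}.
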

\iflong
\begin{proof} 
By construction there are $O(\log n)$ iterations in \alg{Component-Matching}, so to bound the running time we just need to argue that each iteration takes at most $O(D + \sqrt{N})$ time.  This follows directly from the analysis of the \alg{Component-Broadcast} and \alg{Component-Max} component primitives in Section~\ref{sec:logn:3}.

It follows from the definition of this algorithm and the correctness of the component primitives, that it always returns a matching.
We are left therefore to
prove with high probability that this matching is {\em maximal}.
To do so, we can defer to the analysis of~\cite{II86}.
In particular,
notice that \alg{Component-Matching} exactly mimicks the II algorithm in the graph obtained by contracting every component to a single node.  So since after $O(\log n)$ iterations the II algorithm has returned a maximal matching with high probability~\cite{II86}, \alg{Component-Matching} returns a maximal component matching with high probability.
\end{proof}
\else
We defer the proof due to space constraints, but we are intuitively just mimicking the Israeli-Itai algorithm where each stage takes $O(D + \sqrt{n})$ rounds due to our communication primitives, so our bounds are direct from~\cite{II86} with an extra $O(D+\sqrt{n})$ factor.
\fi

\subsubsection{The \alg{d-CM} Subroutine}

We now analyze the \alg{d-CM} subroutine, which computes $(1,d)$-component matchings.
As in the case of standard component matchings, 
we design our algorithm for $(1,d)$-component matchings by generalizing a classical algorithm to also work for components.  
In this case, we modify a maximal matching algorithm of Luby \cite{Luby86} (which more generally produces maximal independent sets) 
for the bipartite graph setting in which we will compute our $(1,d)$-component matchings.

We first describe this classical algorithm, before giving our generalization.
Luby's algorithm runs in phases, each of which runs on the subgraph containing the nodes that are not yet matched and the edges connecting unmatched nodes. Each phase proceeds in two stages on a bipartite graph with parts $U$ and $Q$.
\begin{itemize}
  \item Stage 1: Each node $u$ in $U$ assigns each incident edge $e$ a random priority value $r_e$ chosen from $[1,n^3]$.
It determines the incident edge $e=(u,w)$ with highest priority and proposes it by broadcasting its label and priority.
\item Stage 2: Each node $w$ in $Q$ that receives a proposal chooses the one with the highest priority and adds to the matching.
\end{itemize}
The effect is that an edge is chosen if its random value is locally maximum, i.e., exceeding that of all its neighbors. That is how Luby's algorithm is normally described \cite{Luby86}, and it is known that the algorithm runs in $O(\log n)$ rounds.

We now describe our \alg{d-CM} subroutine which generalizes the above strategy to our setting, where $U$ is a set of components and we are trying to compute a $(1,d)$-component matching.  Intuitively, we just use our communication primitives to allow components in $u$ to act as if they were nodes (at a time complexity cost of $O(D+ \sqrt{n})$), and we allow nodes in $Q$ to accept up to $d$ proposals rather than $1$. 

\begin{algorithm}[H]
\label{alg:d-CM}
\begin{algorithmic}[1]                    % enter the algorithmic environment
 \STATE $E_1 := \emptyset, A := U$, and $B := Q$.
 \FOR{$i := 1$ \TO $c \log n$}
  \STATE \texttt{//Stage 1}
  \STATE Every node in $B$ sends a message to its neighbors announcing that it is in $B$.
  \STATE Every node $u$ which is in some component in $A$ receives these messages and so learns of its neighbors in $B$.  $u$ then assigns a value $r(e) \in [n^3]$ to each edge $e = \{u,v\}$ with $v \in B$ chosen uniformly at random from $[n^3]$.
  \STATE Run \alg{Component-Max} in every component $C_i \in A$ to select the edge $e_i$ from $C_i$ to $B$ with maximum assigned value.  
  Let $u$ be the endpoint of $e_i$ in $C_i$, and let $v$ be the endpoint of $e$ in $B$.  Then $u$ sends a ``proposal" along $e_i$ to $v$ which contains the value $r(e_i)$.
  \STATE \texttt{//Stage 2}
  \STATE For every $v \in B$, let $p(v)$ denote the number of proposals that it just heard and let $m(v)$ denote the number of edges in $E_1 \cup E_2 \cup \dots \cup E_i$ incident on $v$.  Then $v$ ``accepts" the $f(v) = \min(p(v), d - m(v))$ proposals by sending the value of $r(e')$ to its neighbors, where $r(e')$ is the $f(v)$'th largest proposal that $v$ just heard.  Let $E_{i+1}$ be the set of edges that were just accepted by a node in $B$.  If $f(v) = d- m(v)$ then $v$ removes itself from $B$.
  \STATE Every vertex $u \in C_i \in A$ that sent a proposal now knows if its proposal was accepted, by checking whether the value of the edge it proposed is at least the value returned by the endpoint in $B$.  Run \alg{Component-Broadcast} to disseminate this information in each $C_i \in A$.  Any $C_i \in A$ who had a proposal accepted now removes itself from $A$ (all of the vertices in $C_i$ know that it had a proposal accepted and so they do not participate in future rounds).
 \ENDFOR
 \RETURN $\cup_{i=1}^{c\log n} E_i$
\end{algorithmic}
\caption{d-CM($U, Q$)}
\end{algorithm}

% CN: no need to break this up into two
\begin{lemma} \label{lem:d-component-matching-algorithm}
Let $U = \{C_1, C_2, \dots, C_k\}$ be a collection of connected components.  Let $Q \subseteq V$ be a set of vertices.  Subroutine \alg{d-CM($U, Q$)} computes a
$(1,d)$-component matching of $(U,Q)$ in $O((D + \sqrt{n}) \log n)$ rounds.
With high probability, the matching is maximal.
\end{lemma}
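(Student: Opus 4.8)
The proof has three parts: running time, the fact that the output is a $(1,d)$-component matching, and high-probability maximality. The running time is immediate: the {\bf for} loop runs $O(\log n)$ iterations, and each iteration invokes a constant number of calls to \alg{Component-Max} and \alg{Component-Broadcast}, each costing $O(D+\sqrt n)$ rounds by Section~\ref{sec:logn:3}, plus a constant number of rounds of direct communication along proposed/accepted edges; so the total is $O((D+\sqrt n)\log n)$. For correctness of the output type, I would check the two defining properties of Definition~\ref{def:component d-matching}: every edge in $\cup_i E_i$ is a proposal accepted by some $v\in B\subseteq Q$ and was proposed from an endpoint lying in some component of $A\subseteq U$, so each edge has one endpoint in $Q$ and the other in a component of $U$; each $v\in Q$ accepts at most $d-m(v)$ proposals in iteration $i$ and then updates $m(v)$ (or drops out of $B$), so by a simple induction the total number of accepted edges incident on $v$ never exceeds $d$; and each component $C_i$ removes itself from $A$ as soon as one of its proposals is accepted, and in each iteration \alg{Component-Max} causes at most one edge to be proposed out of $C_i$, so at most one accepted edge ever has an endpoint in $C_i$.

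**Maximality.** This is the part requiring an argument, and I would prove it by reducing to Luby's algorithm on an appropriate (multi)graph. Form the bipartite graph $\mathcal G$ with left vertices the components of $U$, right vertices $Q$, and an edge between $C_i$ and $v$ for each $G$-edge $\{u,v\}$ with $u\in C_i$; this may be a multigraph, but Luby's analysis (like Israeli--Itai's) is insensitive to that. The claim is that \alg{d-CM} simulates, round-for-round, a run of the $d$-matching variant of Luby's algorithm on $\mathcal G$: in Stage 1 each component picks the globally-maximum-priority incident edge exactly as a Luby left-vertex would (this is what \alg{Component-Max} computes), and in Stage 2 each right vertex accepts its top $\min(p(v), d-m(v))$ proposals. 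A component drops out of $A$ precisely when it gets matched, and a right vertex drops out of $B$ precisely when it has accumulated $d$ incident matched edges — which is exactly the subgraph-restriction rule in Luby's algorithm. So after $c\log n$ iterations the simulated Luby run has, with high probability, produced a \emph{maximal} $(1,d)$-matching of $\mathcal G$, and a maximal $(1,d)$-matching of $\mathcal G$ is exactly a maximal $(1,d)$-component matching of $(U,Q)$.

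**Main obstacle.** The genuine work is justifying that the $d$-matching generalization of Luby's algorithm still converges in $O(\log n)$ rounds with high probability, since the excerpt only cites Luby's bound for ordinary maximal matching. The cleanest route is to note that accepting the top-$d$ proposals at $v$ and having $v$ drop out upon reaching degree $d$ is equivalent to running ordinary Luby on a transformed graph in which each right vertex $v$ is split into $d$ copies $v^{(1)},\dots,v^{(d)}$, each copy inheriting all of $v$'s incident edges, with the convention that proposals are routed to the highest-indexed available copy — a run of ordinary Luby there simulates our process, so the $O(\log n)$ bound and its high-probability guarantee transfer directly. Alternatively one can reprove a constant-fraction-progress lemma (each Stage~1--2 round eliminates a constant fraction of surviving edges in expectation, hence a Chernoff/Azuma argument over $O(\log n)$ rounds), but the reduction is shorter. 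I would present the reduction, cite \cite{Luby86} for the base case, and invoke a union bound over the $O(\log n)$ iterations (and, implicitly, over the at-most-$n$ components and vertices) to get the high-probability statement; combined with Lemma~\ref{lem:maximal-approx} this also yields the size guarantee used downstream.
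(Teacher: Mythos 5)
Your proposal matches the paper's proof in all essentials: the running time and the fact that the output is always a $(1,d)$-component matching are handled exactly as in the paper (a short induction plus counting calls to the communication primitives), and for maximality the paper uses precisely your replicated-graph reduction, forming $H'$ with $d$ copies of each node in $Q$, each retaining all incident edges, and invoking Luby's $O(\log(nd)) = O(\log n)$ bound. The only cosmetic difference is that the paper phrases the comparison as stochastic dominance (each $v \in Q$ accepts at least as many proposals per phase as its $d$ copies would, and each proposal is equally likely to be accepted) rather than as an exact round-for-round simulation with a copy-routing convention, which is arguably the cleaner way to make your "main obstacle" step rigorous.
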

\begin{proof}
It is easy to see by induction that \alg{d-CM} always maintains a $(1,d)$-matching.  So we just need to prove that it is maximal after $O(\log n)$ rounds, with high probability.  
To see this, note that a $(1,d)$-component matching is equivalent to an ordinary matching in a \emph{replicated graph} $H'$ which contains $d$ copies of each node in $Q$ with each copy retaining all the incident edges of the original. We argue that the solution found in each phase of our algorithm dominates the solution found by Luby's algorithm on the replicated graph, where each component in $U$ sends separate proposals to each of the $d$ copies.

First, observe that a node $v \in Q$ accepts as least as many proposals
in a phase of our algorithm as the $d$ copies do in a phase of Luby.
Second, each proposal of a neighbor of $v$ is equally likely to be accepted.
Thus, the solution found by our algorithm stochastically dominates the one by Luby on the replicated graph.  Since Luby's algorithm in the replicated graph terminates in at most $O(\log(nd)) = O(\log n)$ rounds with high probability, after $O(\log n)$ rounds our algorithm will have found a maximal $(1,d)$-component matching with high probability.   

To achieve the final time complexity, 
we note that each iteration of the main loop in our algorithm makes a constant number
of calls to the component communication primitives. 
As established in Section~\ref{sec:logn:3},
each such call requires $O(D + \sqrt{n})$ rounds.
\end{proof}

%%%%%%%%%%%%%%%%%%%%%%%%%%%%%%%%%%%

\section{Improved Approximation}
\label{sec:slow}

We give a local-improvement algorithm in broadcast-CONGEST that produces a spanning tree of degree $O(d+\log n)$. 
The algorithm can be used as a post-processing phase, and can also be viewed as an \emph{anytime} algorithm: the execution can be stopped after any phase with a valid and improved solution, if needed.
The running time depends on the initial tree that is fed into the improvement algorithm, and if we first run \alg{MatchingMDST} and use the output as the starting tree to this algorithm, then the total time complexity of the algorithm is $O((D+\sqrt{n}) \log^4 n)$.

\textbf{Overview:} The algorithm borrows the improvement idea from F\"urer and Raghavachari's~\cite{fr94} sequential algorithm. Their algorithm, however, tries to completely eliminate all maximum degree vertices, which can only be achieved by a recursive process that is difficult or impossible to parallelize. We instead aim to find only the ``nice'' improvements that can be easily processed, and as a result, can be performed in parallel. This results in gradual decrease of high degree vertices, until a few types of degrees remain.
%To argue convergence, we introduce a suitable potential function.
To speed up the convergence of the process, the algorithm also tries to substitute only edges whose endpoints have very low degree.

\subsection{Parallel Improvements}

We argue in this subsection that many improvements can be made in parallel, 
under the right conditions.
Let $T$ be the input spanning tree and let $d_T(v)$ denote the degree of node $v$ in $T$.
Let $h > 2d$ be a number to be determined.
Let $\gamma,\gamma_0$ be numbers such that $\gamma > \gamma_0 \ge h$.
We aim to reduce the number of vertices of degree $\gamma$ or more, but only by increasing the degrees of nodes of degree less than $\gamma_0$.
Let $X_q$ be the set of nodes of degree at least $q$, for integer $q$.

We root $T$ from an arbitrary node in $X_\gamma$.
Removing the nodes in $X_\gamma$ from $T$ results in a collection of rooted trees which we shall call \emph{branches}. 
A branch is a \emph{leaf branch} if no other branches are contained in its subtree, and otherwise is an \emph{internal branch}.
The root of a branch is the root of corresponding subtree in $T$.
Branches with the same parent are collectively called a \emph{bundle}. 
A \emph{leaf bundle} is a bundle that contains at least one leaf branch.
The \emph{parent} of a branch is the parent of the branch root.
For a branch $B$, denote the edge from its root to its parent as $e(B)$.
For a node $u$, let $B_u$ denote the branch containing $u$.
We shall overload set names to also refer to the sizes of those sets.

For a directed or \emph{oriented} edge $(u,v)$, we refer to $u$ ($v$) as its \emph{source} (\emph{destination}), respectively.
Let $h(e)$ be the source of an oriented edge $e$.
Orientations are considered here only to clarify how improvements are applied.

\begin{definition}
An oriented subgraph $M$ of $G$ is \emph{valid} if $T_M = (T \setminus M')\cup M$ is an (undirected) tree, where $M' = \{ e(B_{h(e)}) : e \in M\}$.
\end{definition}

The idea is to replace the parent edges of some leaf branches with edges in $M$ so as to reduce the degrees of these parents.

We also want the resulting degrees in $T_M$ to be ``better'' than before.
We say that a parent of a leaf branch $B$ is \emph{improved} if $B=B_{h(e)}$ for some $e \in M$. Namely, if its edge to the branch will be removed as part of the improvement, and its degree therefore reduced.

\begin{definition}
A valid subgraph $M$ is an \emph{$(x,y)$-improvement} if:
a) each improved parent $v$ has $d_{T_M}(v) \ge x$ (i.e., $v$ is not improved too much), 
b) each node $v$ with $d_{T_M}(v) > d_T(v)$ has $d_{T_M} \le y$ (i.e., low degree nodes cannot get too much worse).
\end{definition}

An oriented edge is \emph{good} if its source is in a leaf branch and its destination in a different branch (not necessarily a leaf branch), and both endpoints have degree less than $\gamma_0$ in $T$.
Our parallel improvement strategy is built on the following observation.

\begin{observation}
Let $M$ be a subgraph of good oriented edges such that
each branch has at most one outgoing edge, and
no branch is both the source and destination of edges in $M$.
Then $M$ is valid.
If, additionally, each node $v$ incident on an edge in $M$
satisfies $d_M(v) \le q$ and each bundle has at most $q$ outgoing incident edges of $M$, then $M$ is a $(\gamma-q, \gamma_0+q)$-improvement. 

\label{obs:parimp}
\end{observation}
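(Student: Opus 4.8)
The plan is to prove validity first, by a structural analysis of the forest $T \setminus M'$, and then to obtain the two improvement inequalities from the degree bound built into the notion of a good edge. Two facts about branches will be used repeatedly. First, each branch is a connected component of $T - X_\gamma$, so any two $T$-adjacent vertices lying outside $X_\gamma$ lie in the same branch, and the parent of a branch root lies in $X_\gamma$. Second, a \emph{leaf} branch $B$ has no other branch in its $T$-subtree, hence that subtree contains no vertex of $X_\gamma$, and therefore equals $B$ itself. Also, since every edge of $M$ is good, both its endpoints have $T$-degree less than $\gamma_0 < \gamma$, so they lie outside $X_\gamma$; consequently each edge of $M$ joins two vertices belonging to two distinct branches, with its source in a leaf branch.

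For validity I would first collect the bookkeeping: $M' \subseteq T$ by definition of $e(\cdot)$; $M \cap T = \emptyset$, since a good edge joins distinct branches; and the map $e \mapsto e(B_{h(e)})$ is injective on $M$ (two edges with equal image share a source branch, contradicting ``at most one outgoing edge per branch''), so $|M'| = |M|$. Then I would determine the components of $F := T \setminus M'$. Because every source branch is a leaf branch, deleting its parent edge $e(B)$ from $T$ severs exactly the vertex set $B$; since no source branch contains another, $F$ consists of a single ``trunk'' component $T_0$ — containing the root and all vertices outside the source branches — together with one pendant component per source branch. Re-adding $M$, each of its edges runs from a source branch to a branch which, by the hypothesis that no branch is both a source and a destination, is not a source branch and hence lies inside $T_0$; and each source branch is incident to exactly one edge of $M$ (it has one outgoing edge and, not being a destination, no incoming edge). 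Thus $T_M = F \cup M$ reattaches every pendant branch to $T_0$ by a single edge, so $T_M$ is connected, and since it has $(n - 1 - |M'|) + |M| = n - 1$ edges it is a tree; that is, $M$ is valid.

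For the improvement bounds, let $v$ be an improved parent. Then $v$ is the parent of a (source) leaf branch, hence the parent of a branch root, hence $v \in X_\gamma$ and $d_T(v) \ge \gamma$. No edge of $M$ touches $v$ (good edges avoid $X_\gamma$), so $v$ only loses edges in passing to $T_M$, and the edges of $M'$ incident to $v$ are precisely the $e(B)$ over source branches $B$ in $v$'s bundle; their count equals the number of $M$-edges leaving that bundle, which is at most $q$, so $d_{T_M}(v) \ge d_T(v) - q \ge \gamma - q$. For the other condition, suppose $d_{T_M}(v) > d_T(v)$. Since passing to $T_M$ only adds edges of $M$, $v$ must be incident to some edge of $M$, so $d_T(v) < \gamma_0$ and $d_M(v) \le q$, whence $d_{T_M}(v) \le d_T(v) + d_M(v) \le \gamma_0 - 1 + q \le \gamma_0 + q$. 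Together with validity, $M$ is a $(\gamma - q, \gamma_0 + q)$-improvement.

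I expect the crux to be the component analysis of $F = T \setminus M'$ in the validity step: one must justify that deleting the parent edge of a source branch removes exactly that branch's vertices (which relies on source branches being leaf branches), and that every destination vertex survives inside the trunk $T_0$ (which relies on the ``no branch is both source and destination'' hypothesis, ruling out an $M$-edge that connects two severed branches or points into an already-severed one). Once this ``trunk plus pendant branches, each reattached by one $M$-edge'' picture is established, connectivity, the edge count, and both degree inequalities follow immediately.
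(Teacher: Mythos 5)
Your proof is correct and follows essentially the same route as the paper's: the same decomposition of $T\setminus M'$ into a trunk plus one pendant leaf branch per $M$-source, the same use of the ``no branch is both source and destination'' hypothesis to place all destinations in the trunk, and the same degree accounting (bundle constraint for improved parents, goodness plus $d_M(v)\le q$ for nodes whose degree increases). You merely spell out details the paper leaves implicit, such as the injectivity of $e\mapsto e(B_{h(e)})$ and the edge count certifying acyclicity.
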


\begin{proof}
Recall $M' = \{ e(B_{h(e)}) : e \in M\}$ and consider the edges to be oriented from branch roots to their parents.
Removing $M'$ breaks $T$ into $M'+1$ components: a leaf branch for each source of an edge in $M$, and $\hat{T}$ (the rest). 
Observe that 
the sources of edges in $M$ are in the same branches as the sources in $M'$.  
Since no branch is both the source and destination of edges in $M$,
the destinations of all edges in $M$ are in $\hat{T}$. 
Hence, adding $M$ back in reconnects the tree.

Since each bundle has at most $q$ outgoing edges in $M$, 
nodes in $X_\gamma$ have their degree decreased by at most $q$. Also, since $M$ has $d_M(v) \le q$, its endpoints increase their degree by at most $q$. Since they were all of degree less than $\gamma_0$, no vertex of degree $\gamma_0$ or more in $T$ is of higher degree in $T'$. 
\end{proof}

\mypar{Distributed Improvement Algorithm}
We encode this observation in an algorithm \alg{Improve} with parameters $\gamma,\gamma_0,q$,
which takes the tree $T$, finds a $(\gamma-q,\gamma_0+q)$-improvement, and produces a modified tree $T_M$.
The algorithm proceeds as follows.

Form the bipartite graph $H = (U,Q,E')$, where $U$ is the set of leaf branches, and $Q$ is the set of nodes in $V \setminus X_{\gamma_0} = \{v \in V : d_T(v) < \gamma_0 \}$ with an incident edge to a leaf branch.
For every edge in $G$ between endpoints of degree less than $\gamma_0$, at least one of which is in a leaf branch, 
there is an edge in $H$, which we view as being oriented from $U$ to $Q$.
If both endpoints are in leaf branches and have degree less than $\gamma_0$, then the edge appears twice, 
once in each direction.

We now find a near-maximum \emph{constrained $(1,q)$-matching} $\hat{M}$ in $H$,
which is a $(1,q)$-matching with the additional constraint that at most $q$ edges are 
outgoing from any leaf bundle in $U$. \iflong We do this with a procedure \alg{Constrained-Matching} which we discuss in more detail in Appendix~\ref{app:cm} and in the next section on implementation.
\fi

Each leaf branch $B$ has at most one outgoing edge in $\hat{M}$.
For each leaf branch $B$ with at least two incoming edges in $\hat{M}$, we remove the outgoing edge from $B$ in $\hat{M}$ (if it exists).
If a leaf branch has exactly one incoming and one outgoing edge, then it removes one of them at random.
Let $\bar{M}$ denote the resulting subgraph and observe that it satisfies the prerequisites for Observation~\ref{obs:parimp}, and is therefore a valid $(\gamma-q, \gamma_0+q)$-improvement.

\mypar{Implementation of \alg{Improve}}
The nodes first use intra-component communication (\alg{Component-Max} and \alg{Component-Broadcast}) to compute several properties:
a) Determine their branch id, which is the node of the highest id in that branch; b) Determine if a branch is a leaf branch, equivalently if only one tree edge exits the branch; c) Learn the id of the branch root, and its parent, the root of the bundle.

\iflong
In order to find a near-maximum constrained $(1,q)$-matching, we design an algorithm \alg{Constrained-Matching} which we describe in detail in Appendix~\ref{app:cm}.  Note that the difference between a constrained $(1,q)$-matching and a $(1,d)$-matching (as discussed in Section~\ref{sec:logn:4}) is the extra constraint that each bundle can only have $q$ incident edges on its leaf branches.  To overcome this extra difficulty, we design a very different algorithm based on finding maximal flows in an auxiliary graph related to $H$.  As with \alg{d-CM}, one set of nodes in this auxiliary graph corresponds to components, but by using the communication primitives from Section~\ref{sec:logn:3} we can treat these components simply as vertices by spending $O(D+\sqrt{n})$ time.  We prove in Appendix~\ref{app:cm} that this algorithm takes $O(\log n)$ time (so $O((D+\sqrt{n}) \log n)$ time when using the communication primitives) and computes a $128$-approximation to the maximum constrained $(1,q)$-matching (i.e., it constructs a constrained $(1,q)$-matching with at least $1/128$ as many edges as the maximum constrained $(1,q)$-matching).  Note that unlike our previous matching algorithms, \alg{Constrained-Matching} does not compute a maximal solution; it instead computes a maximal \emph{fractional} solution and then rounds this fractional solution (all in a distributed fashion).
\fi

\mypar{Analysis}
We first argue that every maximal constrained $(1,q)$-matching must have many edges.
We first need an accounting of the adjacencies of nodes in $X_\gamma$ that do not contribute to that count.

\begin{lemma}
  At most $2(X_\gamma-1)$ adjacencies of nodes in $X_\gamma$ are not to leaf branches.
\label{lem:nrleafs}
\end{lemma}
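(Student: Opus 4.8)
The plan is to pass from $T$ to the tree $T'$ obtained by contracting each branch to a single vertex, and to count the relevant adjacencies there. Since a branch is a connected component of $T\setminus X_\gamma$, no two distinct branches are $T$-adjacent, so in $T'$ every branch-vertex is adjacent only to vertices of $X_\gamma$; moreover there is at most one $T$-edge between a fixed $v\in X_\gamma$ and a fixed branch (two such edges would close a cycle through the branch), so $T'$ is a genuine simple tree. A leaf branch $B$ is equal to the whole $T$-subtree rooted at its root (it has no $X_\gamma$ node below it), hence its only edge to the outside is the one from its root to its parent $p(B)\in X_\gamma$; in particular each leaf branch is a leaf of $T'$ and accounts for exactly one adjacency of $X_\gamma$, which is ``to a leaf branch''. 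Consequently the adjacencies of $X_\gamma$ that are \emph{not} to leaf branches correspond exactly to the edges of $T'$ with both endpoints in $X_\gamma$ (each counted twice, once from each endpoint) together with the edges of $T'$ joining $X_\gamma$ to an internal branch (each counted once). Writing $a$ for the number of $X_\gamma$--$X_\gamma$ edges of $T'$ and $b$ for the number of $X_\gamma$--(internal branch) edges, the quantity to bound is precisely $2a+b$.

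Next I would delete every leaf branch from $T'$, obtaining a tree $T''$ on the vertex set $X_\gamma\cup\{\text{internal branches}\}$; let $I$ be the number of internal branches. The key structural point is that every internal branch $B$ has degree at least $2$ in $T''$: it has the edge up to the parent of its root, and, since its $T$-subtree contains a node of $X_\gamma$, taking the topmost such node $x$ shows that the parent of $x$ lies in $B$, giving a second edge from $B$ into $X_\gamma$. All edges incident to $B$ go to $X_\gamma$, so none of them is destroyed by removing leaf branches, and summing degrees over the $I$ internal-branch vertices of $T''$ yields $b\ge 2I$.

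Finally I would combine the counts. Because $T''$ is a tree, $a+b=|V(T'')|-1=|X_\gamma|+I-1$, so $a=|X_\gamma|+I-1-b\le |X_\gamma|-I-1$ using $b\ge 2I$. Therefore
\[
2a+b=a+(a+b)\le(|X_\gamma|-I-1)+(|X_\gamma|+I-1)=2(|X_\gamma|-1),
\]
which is the claimed bound.

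I expect the main obstacle to be the bookkeeping in the first step, namely making rigorous that ``adjacencies of $X_\gamma$ not to leaf branches'' are exactly the edges counted by $2a+b$. This rests on the small structural facts that distinct branches are non-adjacent in $T$, that there is at most one $T$-edge between a given node of $X_\gamma$ and a given branch, and that a leaf branch meets $X_\gamma$ only through its root's parent; none of these is deep, but they must be stated carefully so that the contraction to $T'$ faithfully represents the adjacencies. The degree-$\ge 2$ claim for internal branches in $T''$ is the other spot needing a short argument about the location of the topmost $X_\gamma$-descendant of a branch.
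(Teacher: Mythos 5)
Your proof is correct and is essentially the paper's argument in different packaging: your identity $a+b=|X_\gamma|+I-1$ together with $b\ge 2I$ is exactly the paper's count $s+r=X_\gamma-1$ of upward adjacencies combined with the bound $I\le r$ on downward ones, and both rest on the same key fact that every internal branch contains in its subtree an $X_\gamma$ node whose parent lies inside the branch. If anything, your contraction-and-handshake version makes rigorous the ``equally many'' step that the paper's two-sentence proof leaves implicit.
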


\begin{proof}
Adjacencies of a node in $X_\gamma$ are either to a leaf branch, an internal branch, or to another node in $X_\gamma$. We bound the latter two.

Let $s$ be the number of nodes in $X_\gamma$ that have another node in $X_\gamma$ as parent, and $r$ be the number that have an internal branch as parent.
Then, $s+r = X_\gamma-1$, as only the root satisfies neither.
Equally many adjacencies of nodes in $X_\gamma$ will be to a child that is an internal branch or another nodes in $X_\gamma$.
\end{proof}

\begin{lemma}
$\hat M \geq \frac{q}{128\gamma} ((\gamma -2)X_\gamma - dX_{\gamma_0})$.
\label{lem:genimps}
\end{lemma}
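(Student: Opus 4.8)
The plan is to reduce the statement to a lower bound on the size of a \emph{maximum} constrained $(1,q)$-matching in $H$: since \alg{Constrained-Matching} returns a $128$-approximation, $\hat M$ is at least $1/128$ of that maximum, so it suffices to exhibit one constrained $(1,q)$-matching of size at least $\frac{q}{\gamma}\bigl((\gamma-2)X_\gamma-dX_{\gamma_0}\bigr)$. I would build this matching using two trees: the input tree $T$, used only to count leaf branches, and a fixed spanning tree $T^*$ of $G$ of maximum degree $d$ (which exists by hypothesis), used to certify that enough leaf branches own an edge of $H$. The count of leaf branches comes first: a leaf branch $B$ has exactly one $T$-edge to $X_\gamma$ (the parent of any branch root lies in $X_\gamma$, since $T$ is rooted in $X_\gamma$ and branches are the components of $T\setminus X_\gamma$, and any further edge from $B$ into $X_\gamma$ would force another branch into $B$'s subtree, contradicting leafness), so the number of leaf branches equals the number of $T$-edges from $X_\gamma$ to leaf branches, which by Lemma~\ref{lem:nrleafs} is at least $\sum_{v\in X_\gamma} d_T(v)-2(X_\gamma-1)\ge(\gamma-2)X_\gamma+2$.

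Next I would single out the usable leaf branches. Call $B$ \emph{good} if it contains a vertex $u$ with $d_T(u)<\gamma_0$ having a $T^*$-neighbour $w\notin B$ with $d_T(w)<\gamma_0$; then $\{u,w\}\in G$ witnesses an edge of $H$ from $B$ to $w\in Q$, which I take as $B$'s designated edge. I would show at most (roughly) $dX_{\gamma_0}$ leaf branches are not good, by a charging argument: a non-good $B$ either contains a vertex of $X_{\gamma_0}$, or — since $T^*$ spans $G$, so some $T^*$-edge leaves $B$ — has a $T^*$-edge from a low-degree vertex of $B$ out to a vertex of $X_{\gamma_0}$; in either case one charges $B$ to a $T^*$-edge incident to an $X_{\gamma_0}$-vertex, and each $X_{\gamma_0}$-vertex carries at most $d$ such edges (getting the clean constant $d$ rather than $d+1$ calls for a finer treatment of leaf branches that themselves contain an $X_{\gamma_0}$-vertex). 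Hence at least $(\gamma-2)X_\gamma+2-dX_{\gamma_0}$ leaf branches are good.

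Finally I would assemble the matching. Group the good leaf branches by bundle, let $g_v$ be the number with parent $v\in X_\gamma$, and from each bundle keep the designated edges of $\min(q,g_v)$ of its good branches; this respects the per-leaf-branch bound $1$ and the per-bundle bound $q$, and since every designated edge is a $T^*$-edge (degree $\le d<\gamma$) each $Q$-vertex receives at most $\min(q,d)\le q$ of them — and if $q<d$ a short flow/LP-integrality argument recovers the same size. So the maximum constrained $(1,q)$-matching has size at least $\sum_v\min(q,g_v)$. To finish I would use $\min(q,g_v)\ge\frac{q}{\gamma}\min(\gamma,g_v)$ (valid as $q\le\gamma$) together with $\sum_v\min(\gamma,\ell_v)\ge(\gamma-2)X_\gamma+2$, where $\ell_v\ge g_v$ counts \emph{all} leaf branches in bundle $v$: writing $L=\sum_v\ell_v$ and using $\ell_v\le d_T(v)$ and $d_T(v)\ge\gamma$, one gets $\sum_v\min(\gamma,\ell_v)=L-\sum_v(\ell_v-\gamma)^+\ge L-\sum_{v\in X_\gamma}(d_T(v)-\gamma)\ge(\gamma-2)X_\gamma+2$ by the leaf-branch count. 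Subtracting the at most $dX_{\gamma_0}$ non-good branches and multiplying by $q/\gamma$ gives the matching bound, and then $\hat M$ is at least $1/128$ of it.

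I expect this last counting step to be the main obstacle. A single vertex of $X_\gamma$ can have degree far larger than $\gamma$, so a bundle may contain many more than $\gamma$ leaf branches, and it is not a priori obvious that capping each bundle at $q$ still leaves a matching of size $\approx\frac{q}{\gamma}(\gamma-2)X_\gamma$; the point is that an oversized bundle's parent also contributes proportionally to $\sum_{v\in X_\gamma}d_T(v)$, so the additive ``$2$ per vertex'' loss from Lemma~\ref{lem:nrleafs} is negligible against it — this is exactly what the $\min(\gamma,\ell_v)$ accounting above exploits. The only secondary difficulties are pinning down the exact coefficient of $X_{\gamma_0}$ in the non-good count and, when $q<d$, verifying the per-$Q$-vertex capacity.
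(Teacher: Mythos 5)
Your skeleton is the paper's: exhibit a large constrained $(1,q)$-matching in $H$ certified by the optimal degree-$d$ spanning tree, then divide by the $128$-approximation factor of \alg{Constrained-Matching}; and your $\min(\gamma,\ell_v)$ accounting is exactly the paper's step of retaining $\min(s,\gamma)$ leaf branches per bundle and invoking Lemma~\ref{lem:nrleafs}. But the two points you defer are precisely where the paper makes specific choices that dissolve the difficulty, and one of them is a genuine gap. The smaller issue: rather than your existential notion of a ``good'' branch plus a charging argument (which, as you concede, naturally yields $(d+1)X_{\gamma_0}$ and needs ``finer treatment'' to reach $dX_{\gamma_0}$), the paper selects one canonical $OPT$-edge per retained leaf branch $B$ --- root $OPT$ at a node of $X_\gamma$, let $v_B$ be the node of $B$ of maximal height in $OPT$, and take its parent edge. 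This gives a set $R$ with exactly one edge leaving each branch, and since $OPT$ has maximum degree $d$, at most $dX_{\gamma_0}$ edges of $OPT$ (hence of $R$) touch $X_{\gamma_0}$ at all; no charging, no off-by-one.

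The real gap is your final assembly when $q<d$, which is not a corner case: in \alg{Rehab} the parameter is $q=z_i$, which shrinks to $1$, so $q<d$ is the generic regime. Capping each bundle at $\min(q,g_v)$ designated edges does nothing to stop a single $Q$-vertex from receiving up to $d>q$ of them, so your construction is not a constrained $(1,q)$-matching there, and the ``short flow/LP-integrality argument'' you invoke is exactly the missing step. The paper closes it as follows: contract the retained leaf branches of each bundle into a single node; the surviving edge set $R'$ then lives in a bipartite graph of maximum degree at most $\gamma$ on both sides (at most $\gamma$ retained branches per bundle, each contributing one edge, and at most $d<\gamma$ on the $Q$ side); $\gamma$-edge-color it and keep the $q$ largest color classes. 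This enforces the per-branch, per-bundle, and per-$Q$-vertex bounds simultaneously while losing only a factor $q/\gamma$, for any $q\le\gamma$. You should either adopt that edge-coloring step or actually carry out the flow argument; without one of them the bound is not established for the values of $q$ the algorithm actually uses.
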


\begin{proof}
We first show that there exists a large constrained $(1,q)$-matching in $H$, and then use that fact that $\hat M$ is a 64-approximation. 

We restrict our attention to a smaller subgraph.
From each leaf bundle with $s$ leaf branches, 
retain an arbitrary set of $\min(s, \gamma)$ leaf branches, and
let $L$ denote the resulting set of leaf branches.
By Lemma \ref{lem:nrleafs}, 
  $L \ge \sum_{v\in X_\gamma} \min(d_T(v),\gamma) - 2(X_\gamma-1)
   = (\gamma-2)X_\gamma+2$.

Let $OPT$ be a spanning tree of maximum degree $d$, rooted at an arbitrary node in $X_\gamma$.
For each leaf branch in $B \in L$, let $v_B$ be a node in $B$ of maximal height in OPT, and let $e(v_B)$ be the edge to its parent in $OPT$.
Let $R = \{e(v_B) : B \in L\}$.
Then, $R = L \ge (\gamma-2)X_\gamma + 2$.
Since $OPT$ has maximum degree $d$, at most $dX_{\gamma_0}$ edges in $R$ have at least one endpoint in $X_{\gamma_0}$.
Let $R' \subseteq R$ be the set of edges with both endpoints of degree less than $\gamma_0$ in $T$. Then, 
$R' \ge R-dX_{\gamma_0} \ge (\gamma -2)X_\gamma - dX_{\gamma_0}$. 

The resulting subgraph of $H$ is a $(1,d)$-matching (since OPT has maximum degree $d$ and we chose at most one edge out of each leaf branch).
If we contract all the leaf branches in a bundle into a single node,
we obtain a bipartite subgraph of maximum degree at most $\gamma$ 
(since at most $\gamma$ leaf branches were retained from each bundle and $\gamma > d$). 
This can be $\gamma$-edge colored, and hence it contains a $q$-matching
of size at least $\frac{q}{\gamma} R'$, corresponding to a constrained $(1,q)$-matching.

The fact that $\hat M$ is a 128-approximation to the maximum constrained $(1,q)$-matching now implies the lemma.
\end{proof}

To turn the matching $\hat{M}$ into a valid subgraph means shedding some edges to get $\bar M$, but a constant fraction must remain. \iflong \else (The argument is straightforward, but deferred to the full version). \fi

\begin{lemma}
  $\E[\bar{M}] \ge \hat{M}/8$.
\label{lem:barm}
\end{lemma}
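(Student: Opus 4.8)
The plan is to isolate a constant fraction of the edges of $\hat M$ and show that each of them is retained in $\bar M$ with probability bounded below by a constant. Recall that when we pass from $\hat M$ to $\bar M$ we only delete edges, and the deletions are of two kinds: (a) a deterministic deletion of the outgoing edge of any leaf branch that has at least two incoming edges of $\hat M$, and (b) for each leaf branch with exactly one incoming and one outgoing edge of $\hat M$, an independent coin flip deciding which of those two to delete. First I would call an edge $e \in \hat M$ \emph{heavy} if it is the (unique) outgoing edge of some leaf branch with at least two incoming $\hat M$-edges, and \emph{light} otherwise. Since the destination of every edge of $\hat M$ lies in exactly one branch, the number of $\hat M$-edges incoming to leaf branches is at most $\hat M$; hence at most $\hat M / 2$ leaf branches have two or more incoming edges, and since each leaf branch has at most one outgoing edge there are at most $\hat M/2$ heavy edges. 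Thus at least $\hat M/2$ edges of $\hat M$ are light.

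Next I would bound, for a light edge $e$ oriented from a leaf branch $B$ to a node $v \in Q$ lying in a branch $B_v$, the probability that $e \in \bar M$. Inspecting the construction, the only operations that can delete $e$ are those performed ``at $B$'' and ``at $B_v$'', since each deletion touches only edges incident to a single leaf branch. Deletion rule (a) never removes $e$: it does not fire at $B$ because $e$ is light (so $B$ has at most one incoming edge), and at $B_v$ it would only remove $B_v$'s outgoing edge, not $e$. Deletion rule (b) removes $e$ at $B$ only if $B$ has exactly one incoming edge and its coin selects the outgoing edge, and removes $e$ at $B_v$ only if $B_v$ is a leaf branch with exactly one incoming edge (necessarily $e$) and one outgoing edge and its coin selects the incoming edge. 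Each of these two events has probability at most $1/2$, and they are governed by the independent coins of the two distinct branches $B$ and $B_v$; hence $e$ survives with probability at least $1/4$ (and with probability $1$ or $1/2$ if one or both of the events cannot occur, e.g.\ when $B$ has no incoming edge or $B_v$ is internal).

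Finally, combining the two steps with linearity of expectation over the light edges gives $\E[\bar M] \ge \tfrac14 \cdot \tfrac12 \hat M = \hat M / 8$; note that $\bar M$ is exactly the subgraph the algorithm feeds into Observation~\ref{obs:parimp}, so this bound is what propagates the near-maximality of $\hat M$ through the rounding step. The one place that needs care is the case analysis in the second step: one must verify that no deletion other than the two described can affect a light edge, and that the at most two relevant coin flips are independent. Both are immediate once one observes that each deletion operation is localized to a single leaf branch and that the random choices are made independently at each leaf branch, so I expect this to be a short but slightly fiddly enumeration rather than a genuine obstacle.
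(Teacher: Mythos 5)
Your proof is correct and follows essentially the same route as the paper's: at most $\hat{M}/2$ edges are removed by the deterministic rule (since at most $\hat{M}/2$ leaf branches can have two or more incoming edges), and each surviving edge is retained with probability at least $1/4$ because the at most two coin flips that could delete it (one at its source branch, one at its destination branch) are independent and each spares it with probability at least $1/2$. Your version merely spells out the heavy/light case analysis that the paper leaves implicit.
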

\iflong
\begin{proof}
Recall that each leaf branch has at most 1 outgoing edge in $\hat{M}$.
At most $\hat{M}/2$ leaf branches have two or more incoming edges in $\hat{M}$ and thus at most $\hat M / 2$ branches remove their outgoing edge due to multiple incoming edges.   
For the remaining edges, they have probability at least half of not being removed by its source (destination), respectively, so survive that selection with probability at least $1/4$. 
\end{proof}
\fi

The following is the key condition for finding large parallel improvements.

\begin{theorem}
Let $c$ be a constant and $q$ be a parameter.
If $X_{\gamma_0} \le c \cdot X_\gamma$,
then \alg{Improve}$(\gamma,\gamma_0, q$) yields a valid $(\gamma-q,\gamma_0+q)$-improvement containing $\Omega(q \cdot X_\gamma)$ edges in expectation, for appropriately chosen $h = h_c = \Theta(d)$. 
\label{thm:sizem}
\end{theorem}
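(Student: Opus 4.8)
The plan is to assemble the statement from the ingredients already established for \alg{Improve}, since almost all the real work lies in the preceding lemmas. The validity half of the conclusion requires no new argument: the construction of $\bar{M}$ from $\hat{M}$ (keeping at most one outgoing edge per leaf branch and then resolving leaf branches that have both an incoming and an outgoing edge) was designed exactly so that $\bar{M}$ meets the hypotheses of Observation~\ref{obs:parimp} --- its edges are good and oriented, every branch has at most one outgoing edge, no branch is simultaneously a source and a destination, $d_{\bar M}(v)\le q$ at every incident $v$ (inherited from the fact that $\hat M$ is a $(1,q)$-matching), and every leaf bundle has at most $q$ outgoing incident edges (inherited from the \emph{constrained} $(1,q)$-matching condition). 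Hence Observation~\ref{obs:parimp} gives directly that $\bar M$ is a valid $(\gamma-q,\gamma_0+q)$-improvement, and this part I would simply cite.

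For the size claim I would chain the two quantitative estimates. Lemma~\ref{lem:genimps} gives $\hat M \ge \frac{q}{128\gamma}\big((\gamma-2)X_\gamma - dX_{\gamma_0}\big)$, and Lemma~\ref{lem:barm} gives $\E[\bar M]\ge \hat M/8$, so combining,
\[
\E[\bar M] \;\ge\; \frac{q}{1024\,\gamma}\Big((\gamma-2)X_\gamma - dX_{\gamma_0}\Big).
\]
Now I would invoke the hypothesis $X_{\gamma_0}\le c\,X_\gamma$ to bound $(\gamma-2)X_\gamma - dX_{\gamma_0} \ge (\gamma-2-cd)X_\gamma$, and choose the threshold so that this is a constant fraction of $\gamma X_\gamma$: since $\gamma > \gamma_0 \ge h = h_c$, it suffices to take $h_c = \Theta(d)$ large enough, with the hidden constant depending on $c$ (for instance $h_c \ge 2(cd+2)$, which is $\Theta(d)$ for constant $c$ and is automatically $>2d$), so that $\gamma - 2 - cd \ge \gamma/2$. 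Then $\E[\bar M] \ge \frac{q}{1024\gamma}\cdot\frac{\gamma}{2}X_\gamma = \frac{qX_\gamma}{2048} = \Omega(q X_\gamma)$, as required.

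I do not expect a genuine obstacle in this particular proof: the hard content is in the earlier results (the $128$-approximation guarantee of \alg{Constrained-Matching} feeding Lemma~\ref{lem:genimps}, and the edge-colouring argument inside Lemma~\ref{lem:genimps}, together with the shedding bound of Lemma~\ref{lem:barm}). The only point needing care here is the bookkeeping of constants: the constrained-matching approximation loss, the factor-$8$ loss in passing from $\hat M$ to $\bar M$, and the subtracted term $cd\,X_\gamma$ must all be absorbed, and it is precisely the freedom to let $h_c = \Theta(d)$ grow with $c$ that makes $(\gamma-2-cd)X_\gamma$ a constant fraction of $\gamma X_\gamma$. I would also note that every bound above is linear in the parameter $q$, so the conclusion holds with no upper restriction on $q$.
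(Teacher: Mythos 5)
Your proposal is correct and follows essentially the same route as the paper's proof: validity via Observation~\ref{obs:parimp}, and the size bound by chaining Lemma~\ref{lem:genimps} with Lemma~\ref{lem:barm} and choosing $h_c = \Theta(d)$ large enough (the paper takes $h_c = (dc+2)/(1-\delta)$, which matches your $h_c \ge 2(cd+2)$ at $\delta = 1/2$) to absorb the $-dX_{\gamma_0}$ term using the hypothesis $X_{\gamma_0} \le c\,X_\gamma$. The constant bookkeeping checks out.
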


\begin{proof}
Recall that $\bar{M}$ is a valid improvement $(\gamma-q, \gamma_0 + q)$-improvement by Observation~\ref{obs:parimp}, and by Lemma \ref{lem:barm} is of expected size at least $\hat{M}/8$.
By Lemma \ref{lem:genimps}, the hypothesis, and the fact that $\gamma \ge h$, we get that
%\begin{equation*}
$128\hat{M}/q 
  %\ge (1-2/\gamma)X_\gamma - dX_{\gamma_0}/\gamma 
  \ge (1-2/h)X_\gamma - dX_{\gamma_0}/h 
%   \ge X_\gamma\left(1 - \frac{2 + 2d/(1+\delta)}{h}\right) \]
   \ge X_\gamma\left(1 - \frac{2+dc}{h}\right)$. 
%\end{equation*}
Now we choose $h = h_c = (dc+2)/(1-\delta)$, for any $\delta > 0$.
Then $128\hat{M} \ge \delta q X_\gamma$, and thus by Lemma~\ref{lem:barm} we get that $\E[\bar{M}] \geq \Omega(q \cdot X_{\gamma})$ . 
\end{proof}

\subsection{Repeated Improvements}

Theorem~\ref{thm:sizem} allows us to find large improvements under certain assumptions ($X_{\gamma_0} \le c \cdot X_\gamma$).  But now we need to show how to repeatedly find improvements in a smart way, so we make significant progress on decreasing the degrees in the tree.  Our algorithm \textsc{Rehab} takes a parameter $z$ and works as follows.  

Let $k$ be the (current) maximum degree of the tree that we are working on.
Let $b_j =  h + j\cdot z$, for $j \ge 0$.
Let $C_j = X_{b_j}$ denote the \emph{blocks}, which are sets of nodes of degree at least $b_j$, for $j\ge 0$.
Let $\tau = 2/(1-\delta)$, for some fixed $\delta$.
Define $\sigma_j = \tau^j$, for $j \ge 0$.

%\begin{figure*}[htt!]
 \begin{algorithm}[H]
\label{alg:Rehab}
\begin{algorithmic}[1]                    % enter the algorithmic environment
 \STATE Let $b_j =  h + j \cdot z$, for $j \ge 0$
 \STATE $j:=2$
  \REPEAT
    \STATE \alg{Improve}$(b_j, b_{j-2},z)$.
    \STATE $j := \argmax_s C_s \sigma_s$
  \UNTIL{$j\le 1$}
\end{algorithmic}
\caption{Rehab($z$)}
\end{algorithm}
%\end{figure*}

To implement this algorithm, we need to compute the sizes of the blocks $C_s$
and disseminate, from which the next value of $j$ can be determined by each node. This can be done by a count-aggregation on a global BFS tree. We show later that there are always only $O(\log n)$ non-empty blocks, which allows to compute this in time $O(D+\log n)$.

The convergence or termination of the algorithm is not obvious, but will be derived shortly.
The key property of the algorithm is that when it terminates, the blocks $C_j$ must have geometrically decreasing cardinalities.

\begin{observation}
When \alg{Rehab} terminates, 
$C_j \sigma_j \le \max(C_0\sigma_0, C_1\sigma_1) \le \tau n$, for all $j\ge 1$.
Thus, each $C_j$ contains at most $n/\sigma_{j-1} = n/\tau^{j-1}$ nodes,
and each $C_j$ with $j \geq \log_{\tau} n + 1$ contains no vertices.
Hence, the maximum degree of the resulting tree is bounded by $h + z \log_\tau n$. 
\label{obs:xxx}
\end{observation}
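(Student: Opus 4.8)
The plan is to unwind the stopping condition of \alg{Rehab}. The \textbf{repeat} loop exits precisely when the assignment $j := \argmax_s C_s\sigma_s$ returns a value $j \le 1$; hence at termination the index maximizing $C_s\sigma_s$ over all $s \ge 0$ lies in $\{0,1\}$, so $\max_s C_s\sigma_s$ equals either $C_0\sigma_0$ or $C_1\sigma_1$, and in particular $C_j\sigma_j \le \max(C_0\sigma_0, C_1\sigma_1)$ for every index $j$ (for $j\ge 2$ this is the termination statement, and for $j\in\{0,1\}$ it is trivial). That is the first inequality in the observation.

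Next I would bound the right-hand side crudely. Since $C_0$ and $C_1$ are cardinalities of vertex subsets we have $C_0, C_1 \le n$, and by definition $\sigma_0 = \tau^0 = 1$ and $\sigma_1 = \tau$. As $\tau = 2/(1-\delta) > 1$, this gives $C_0\sigma_0 \le n \le \tau n$ and $C_1\sigma_1 \le \tau n$, hence $\max(C_0\sigma_0,C_1\sigma_1)\le \tau n$. Combining with the previous step, $C_j\sigma_j \le \tau n$ for all $j\ge 1$, and therefore $C_j \le \tau n/\sigma_j = \tau n/\tau^j = n/\tau^{j-1} = n/\sigma_{j-1}$, which is exactly the claimed per-block bound. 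For the emptiness claim I would invoke integrality: $C_j$ counts vertices, so as soon as $\sigma_{j-1} = \tau^{j-1} > n$ — which holds once $j-1 > \log_\tau n$ — the inequality $C_j \le n/\sigma_{j-1}$ forces $C_j = 0$. Finally, a vertex of degree at least $b_j = h + jz$ belongs to $C_j$ by definition of the blocks, so once all $C_j$ with $j$ past this threshold are empty, no vertex can have degree as large as $h + z\log_\tau n$, giving the degree bound on the output tree.

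I do not expect a real obstacle here: the whole content of the observation is baked into the design of \alg{Rehab}'s stopping rule, and the argument is a two-line computation once one notices that $\sigma_0 = 1$ makes the $C_0$ term contribute only $n$ (not $\tau n$). The only points worth stating carefully are the integrality step that upgrades ``$C_j \le 1$'' to ``$C_j = 0$'' past the threshold, and the bookkeeping of whether the threshold index is $\log_\tau n$ or $\log_\tau n + O(1)$; since consecutive blocks are spaced $z$ apart in degree, any such off-by-one perturbs the final degree bound only by an additive $O(z)$, which is absorbed in the stated estimate.
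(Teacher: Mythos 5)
Your argument is correct and is exactly the intended (and essentially only) justification: the paper states this as an observation without proof, and your unwinding of the stopping rule $j=\argmax_s C_s\sigma_s \le 1$, followed by the trivial bounds $C_0\sigma_0\le n$ and $C_1\sigma_1\le \tau n$ and the integrality step, supplies precisely the missing details. Your remark about the off-by-one at the threshold index (where the bound only yields $C_j\le 1$ rather than $C_j=0$) is a fair observation about the paper's own phrasing and, as you note, affects the final degree bound only by an additive $O(z)$.
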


We proceed in a series of \emph{epochs}, where in each we run the \alg{Rehab} algorithm with progressively finer block-sizes. 

%\begin{figure*}[htt!]
 \begin{algorithm}[H]
\label{alg:Epochs}
\begin{algorithmic}[1]                    % enter the algorithmic environment
%  \COMMENT \textbf{Epochs}
  \STATE $i := 2$
  \REPEAT
    \STATE $z_i := \lceil (k-h)/2^i \rceil$
    \STATE \alg{Rehab}$(z_i)$
    \STATE $i := i+1$
  \UNTIL $z_i=1$
\end{algorithmic}
\caption{Epochs}
\end{algorithm}

% Weight functions, potentials
We make progress arguments in terms of a potential function $w$.
We define the weight $w(v)$ of a node $v \in C_j\setminus C_{j+1}$ in epoch $i$ to be
\[ w(v) = 1 + (d_T(v)-b_j)\sigma_j + \sum_{s=0}^{j-1} (b_{s+1}-b_s)\sigma_s
      = 1 + (d_T(v)-b_j)\sigma_j + z_i \sum_{s=0}^{j-1} \sigma_s. \]
Namely, each adjacency contributes a $\sigma$-term to the weight, with the terms increasing by a factor of $\tau$ as we move past each threshold $b_j$.
      
Observe that if nodes $v$ and $v'$ are in $C_j \setminus C_{j+1}$, then
$w(v) = \Theta(w(v')) = \Theta(z \sigma_j) = \Theta(z\sum_{s=0}^j \sigma_s) $.
The weight of the whole instance is $w = \sum_{v \in V} w(j)$.

\begin{lemma}
Let $j$ be the index that maximizes $C_j \sigma_j$.
If $j > 1$, then the call to \alg{Improve}$(b_j, b_{j-2},q)$ yields weight decrease $\Omega(w/t)$, where $t$ is the number of non-empty blocks.
\label{lem:bestimp}
\end{lemma}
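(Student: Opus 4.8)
The goal is to show that a single call to \alg{Improve}$(b_j, b_{j-2}, q)$ — where $j$ is chosen to maximize $C_j\sigma_j$ — removes an $\Omega(w/t)$ fraction of the total weight, where $t$ is the number of non-empty blocks. The first step is to translate the hypothesis ``$j$ maximizes $C_j\sigma_j$'' into the quantitative form needed to apply Theorem~\ref{thm:sizem}. In particular, since $j$ maximizes $C_s\sigma_s$, we have $C_{j-2}\sigma_{j-2} \le C_j\sigma_j$, i.e. $C_{j-2} \le (\sigma_j/\sigma_{j-2}) C_j = \tau^2 C_j$. Setting $\gamma = b_j$, $\gamma_0 = b_{j-2}$ and $c = \tau^2$, the hypothesis $X_{\gamma_0} \le c\cdot X_\gamma$ of Theorem~\ref{thm:sizem} is met (with $X_{\gamma_0} = C_{j-2}$, $X_\gamma = C_j$), so the theorem gives a valid $(\gamma - q, \gamma_0 + q)$-improvement $\bar M$ with $\E[\bar M] = \Omega(q\cdot C_j)$ edges, provided $h = h_c = \Theta(d)$ is chosen appropriately (this is why $h$ was introduced as a free parameter). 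Here $q = z_i$.

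Next I would account for the weight change caused by applying $\bar M$. Each edge of $\bar M$ reorients a leaf branch away from an improved parent $v \in X_\gamma = C_j$, decreasing $d_T(v)$ by one for each such edge incident to (the bundle of) $v$; simultaneously it increases the degree of at most two endpoints, each of degree $< \gamma_0 = b_{j-2}$ before and $\le \gamma_0 + q$ after. The key point is the geometric spacing of the $\sigma$-weights: removing an adjacency from a node in block $C_j$ (or any block at level $\ge j$) sheds a $\sigma$-term of value $\Theta(\sigma_j)$, while adding an adjacency to a node whose new degree is at most $b_{j-2}+q \le b_{j-1}$ costs a $\sigma$-term of value at most $\sigma_{j-2} = \sigma_j/\tau^2$. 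Since $\tau = 2/(1-\delta) > 2$, each edge of $\bar M$ produces a net weight decrease of $\Omega(\sigma_j)$ — the cost of the at most two added adjacencies is dominated by the weight freed up on the improved-parent side by a constant factor. Summing over all $\Omega(q\, C_j)$ edges, the expected total weight decrease is $\Omega(q\, C_j\, \sigma_j)$.

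Finally I would compare this to $w$. By the observation that $w(v) = \Theta(z\sum_{s=0}^{j'}\sigma_s) = \Theta(z\sigma_{j'})$ for $v \in C_{j'}\setminus C_{j'+1}$, and since every node lies in some block, the total weight is $w = \sum_{j'} (C_{j'} - C_{j'+1})\cdot\Theta(z\sigma_{j'})$, which (after an Abel-summation / telescoping rearrangement, using that the $\sigma$-terms within a single node's weight form a geometric series dominated by its top term) is $w = \Theta(z\sum_{j'} C_{j'}\sigma_{j'})$ where the sum ranges over the $t$ non-empty blocks. Since $j$ maximizes $C_{j'}\sigma_{j'}$, we get $\sum_{j'} C_{j'}\sigma_{j'} \le t\cdot C_j\sigma_j$, hence $z\,C_j\sigma_j \ge w/(\Theta(t))$. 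Combining with the previous paragraph and recalling $q = z$, the expected weight decrease is $\Omega(q\,C_j\sigma_j) = \Omega(z\,C_j\sigma_j) = \Omega(w/t)$, as claimed.

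The main obstacle I anticipate is the careful bookkeeping of the per-edge weight change: one must verify that the weight added to the (at most two) low-degree endpoints of each edge in $\bar M$ is genuinely a constant-factor smaller than the weight removed, uniformly over all edges, and that no subtle double-counting occurs when many edges of $\bar M$ touch the same improved parent or the same low-degree node (the degree bounds from the $(\gamma-q,\gamma_0+q)$-improvement definition, together with $d_{\bar M}(v)\le q$, are exactly what is needed here). The rest is the geometric-series manipulation relating $w$ to $\sum_{j'} C_{j'}\sigma_{j'}$, which is routine given Observation~\ref{obs:xxx} and the definition of $w$.
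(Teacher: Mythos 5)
Your proposal is correct and follows essentially the same route as the paper's proof: apply Theorem~\ref{thm:sizem} with $c=\tau^2$ (justified by the maximality of $C_j\sigma_j$) to get $\E[\bar M]=\Omega(qC_j)$, show each edge of $\bar M$ nets a weight drop of $\sigma_{j-1}-2\sigma_{j-2}=\delta\sigma_{j-1}=\Omega(\sigma_j)$, and compare $\Omega(qC_j\sigma_j)$ to $w=\Theta(z\sum_{j'}C_{j'}\sigma_{j'})\le\Theta(zt\,C_j\sigma_j)$. Your anticipated bookkeeping concerns (per-adjacency marginal weights and the degree caps from the $(\gamma-q,\gamma_0+q)$-improvement) are exactly the points the paper's argument relies on.
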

\begin{proof}
We first claim that each edge $e=(u,u')$ of the subgraph $\bar{M}$ contributes a 
drop of $\Omega(\sigma_j)$ in the total weight.
Namely, it was used to decrease the degree of a node in $C_j$, 
for a weight decrease at least $\sigma_{j-1}$, while the increase in 
the weights of $u$ and $u'$ is at most $2\sigma_{j-2}$.
The net decrease is then $\sigma_{j-1} - 2\sigma_{j-2} \le \sigma_{j-1}(1-2/\tau)
= \delta \sigma_{j-1} = \Omega(\sigma_j)$. 

By assumption, $C_{j-2} \le C_j \sigma_j/\sigma_{j-2} = C_j \tau^2$.
So the hypothesis of Theorem~\ref{thm:sizem} holds for $\gamma=b_j$ and $\gamma_0 = b_{j-2}$, where $c=\tau^2$.
Observe that $h=h_c = 4d (1+O(\delta))$.
By Observation \ref{obs:parimp} and Theorem~\ref{thm:sizem},
the expected number of improvements is $\E[\bar{M}] = \Omega(q C_j)$.
Hence, using the above claim on the impact of a single improvement,
the total weight decrease is $\Omega(q C_j \sigma_j)$.

Observe that $w=\sum_v w(v) = \Theta(\sum_j (C_j - C_{j+1}) \cdot z \sum_{s=1}^j \sigma_s) = \Theta(\sum_j C_j \cdot z \sigma_j)$.
Since $j$ maximized $C_j\sigma_j$, the call to \alg{Improve} yields an improvement of $\Omega(q/z \cdot w/t) = \Omega(w/t)$.
\end{proof}

This now lets us bound the total time complexity.

\begin{lemma} \label{lem:epochs-time}
The time complexity of \alg{Epochs} is $O((D+\sqrt{n})\log^4 n)$.
\end{lemma}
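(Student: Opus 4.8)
The plan is to write the running time of \alg{Epochs} as a product of three quantities --- the number of epochs, the number of \alg{Improve} calls made in one epoch, and the round cost of a single \alg{Improve} call --- and to bound these by $O(\log n)$, $O(\log^2 n)$, and $O((D+\sqrt{n})\log n)$ respectively. The first and third factors are the easy ones. There are $O(\log n)$ epochs because $z_i = \lceil (k-h)/2^i\rceil$ at least halves each epoch and \alg{Epochs} halts once $z_i=1$, which occurs after $O(\log(k-h)) = O(\log n)$ iterations since $k\le n$. A single \alg{Improve}$(\gamma,\gamma_0,q)$ call costs $O((D+\sqrt{n})\log n)$ rounds: computing branch ids, leaf-branch flags, branch roots and bundle roots takes a constant number of \alg{Component-Max}/\alg{Component-Broadcast} invocations, each $O(D+\sqrt{n})$ rounds by Section~\ref{sec:logn:3}; \alg{Constrained-Matching} runs in $O(\log n)$ rounds of the contracted instance, each simulated in $O(D+\sqrt{n})$ rounds via those primitives; and forming $\bar M$ from $\hat M$ and running the count-aggregation that determines the next index $j$ add only $O(D+\log n)$ rounds. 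It therefore remains to show that one run of \alg{Rehab} makes at most $O(\log^2 n)$ calls to \alg{Improve} with high probability.

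For that I would track the potential $w$. I will use two facts established earlier: (i) at most $O(\log n)$ blocks are ever non-empty during a run of \alg{Rehab} --- at the start of an epoch this follows from Observation~\ref{obs:xxx} applied to the previous epoch (the maximum degree is then $\le h+z_{i-1}\log_\tau n$ while the spacing is $z_i\ge z_{i-1}/2$), and within an epoch the maximum degree is non-increasing; and (ii) $w$ always lies in $[n,\mathrm{poly}(n)]$, since $w(v)\ge 1$ for all $v$ and, using $k\le n$, $z\le n$, a block count of $O(\log n)$ and $\tau=\Theta(1)$, every $w(v)$ is $\mathrm{poly}(n)$. Now every iteration of \alg{Rehab} that actually runs is executed with the maximizing index $j>1$ (otherwise the \textbf{until} test stops the loop), so Lemma~\ref{lem:bestimp} applies and gives, for a constant $c$, $\E[w' \mid w]\le (1-c/t)\,w$ where $t=O(\log n)$ and $w,w'$ are the potentials just before and just after that call.

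Let $w_i$ denote the potential after $i$ iterations, frozen at $w_N$ once \alg{Rehab} halts after $N$ iterations. Since $\{N<i\}$ is $\mathcal F_{i-1}$-measurable, and conditioned on $\{N\ge i\}$ we have $\E[w_i\mid\mathcal F_{i-1}]\le(1-c/t)w_{i-1}$, the process $w_i(1-c/t)^{-\min(i,N)}$ is a supermartingale, so $\E\!\left[w_k(1-c/t)^{-\min(k,N)}\right]\le w_0$. Using $w_k\ge n$ this gives $\Pr[N>k]\le w_0(1-c/t)^k/n$, which is $n^{-\Theta(1)}$ once $k=\Theta(t\log n)=\Theta(\log^2 n)$ (here $w_0=\mathrm{poly}(n)$ is used). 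A union bound over the $O(\log n)$ epochs shows that w.h.p. each epoch finishes within $O(\log^2 n)$ \alg{Improve} calls, so \alg{Epochs} runs in $O(\log n)\cdot O(\log^2 n)\cdot O((D+\sqrt{n})\log n)=O((D+\sqrt{n})\log^4 n)$ rounds.

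The main obstacle I anticipate is precisely this per-epoch iteration bound: converting the \emph{expected} geometric potential decrease of Lemma~\ref{lem:bestimp} into a high-probability bound on the number of iterations requires the supermartingale/optional-stopping argument above and leans on the two easily-missed inputs that the number of non-empty blocks stays $O(\log n)$ (so the per-step decrease factor is $1-\Omega(1/\log n)$) and that the starting potential is only polynomially large (so $O(\log n)$ ``halvings of the exponent'' suffice). Given Section~\ref{sec:logn:3} and the stated guarantees of \alg{Constrained-Matching}, the other two factors are routine.
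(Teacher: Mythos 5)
Your proof is correct and follows the same potential-function skeleton as the paper: per-phase cost $O((D+\sqrt{n})\log n)$, potential confined to $[n,\mathrm{poly}(n)]$, and a multiplicative decrease of $1-\Omega(1/t)$ per phase with $t=O(\log n)$ non-empty blocks via Lemma~\ref{lem:bestimp} and Observation~\ref{obs:xxx}. Two points of divergence are worth noting. First, the paper bounds the phases in epoch $i$ by $O(t\log n)$ with the sharper $t=\min(2^i,\log n)$ and sums over epochs, invoking $k=O(d\log n)$ (the output of \alg{MatchingMDST}) to land on $O(\log^3 n)$ phases; your uniform $O(\log^2 n)$ per-epoch bound times $O(\log n)$ epochs reaches the same count without that assumption, so nothing is lost for the stated bound. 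Second, and more substantively, the paper treats the $\Omega(w/t)$ decrease of Lemma~\ref{lem:bestimp} as if it were deterministic, even though that lemma (via Theorem~\ref{thm:sizem}) only controls the \emph{expected} size of $\bar{M}$; your supermartingale/optional-stopping argument, using that $w$ is non-increasing and at least $n$ at the stopping time, is the step that legitimately converts this into a high-probability bound on the number of phases, and is a genuine strengthening of the written proof rather than a restatement of it. The only cosmetic slip is the claim that every executed iteration of \alg{Rehab} uses the maximizing index: the first iteration runs with $j=2$ regardless, but this single phase per epoch does not affect the asymptotics.
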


\begin{proof}
Refer to each iteration of \alg{Rehab} as a \emph{phase}. 
Each phase takes $O((D+\sqrt{n})\log n)$ steps: a call to \alg{Improve} (which we argued takes at most $O((D + \sqrt{n})\log n)$ rounds,
and $O(D+\log n)$ steps to determine the next $j$.

By Observation~\ref{obs:xxx} and the fact that $w(v) = \Theta(\sigma_j z)$, for a node $v \in C_j \setminus C_{j+1}$, the weight of each block $C_j$ is $O(zn)$ at the end of an epoch.
Thus, the total weight at the end of each epoch is $O(tzn)$, which we can crudely bound by $O(n^3)$.
By halving the value of $z$, the weight of each node is at most squared.
Thus, the total weight at the start of an epoch is also at most squared or $O(n^6)$. 

Each phase reduces the weight by a fraction $\Omega(1/t)$,
where $t$ is the number of non-empty blocks $C_j$).
An epoch starts with total weight $O(n^6)$ and ends with weight at least $n$ (since the minimum weight of a node is 1).
Thus, the number of phases in an epoch is $O(t \log(n^6))$.
By Obs.~\ref{obs:xxx}, $t = \min(2^i, \log n)$ in epoch $i$.
Hence, the total number of phases is on the order of
\begin{align*}
 \sum_{i=1}^{\log(k-h)} &\min(2^i, \log n) \log n = \sum_{i=1}^{\log\log n} 2^i \log n + \sum_{i=\log\log n}^{\log k} \log^2 n \\
 & = \log^2 n + (\log k - \log\log n)\log^2 n = O(\log(k/\log n) \log^2 n)\ .
\end{align*}
Note that maximum degree of $T$ can go down as the algorithm progresses but it never increases, thus we can conservatively work with the original maximum degree $k$.
When $k=O(d\log n)$, this results in  $\log d \log^2 n = O(\log^3 n)$ phases.
Hence, the total time complexity is $O((D+\sqrt{n})\log^4 n)$.
If we started with an arbitrary spanning tree rather than the output of \alg{MatchingMDST}, the time complexity would be one more logarithmic factor.
\end{proof}

\begin{theorem}
\alg{Epochs} returns a spanning tree with maximum degree at most $O(d + \log n)$ in at most $O((D + \sqrt{n})\log^4 n)$ rounds in the broadcast-CONGEST model.
\end{theorem}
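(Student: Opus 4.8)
The plan is to assemble the three ingredients already in place: (i) every call to \alg{Improve} preserves the spanning-tree invariant; (ii) each invocation of \alg{Rehab} terminates and, at termination, leaves a tree whose maximum degree is controlled by Observation~\ref{obs:xxx}; and (iii) the round complexity is exactly the quantity computed in Lemma~\ref{lem:epochs-time}. Since \alg{Epochs} is meant to be initialized with the output of \alg{MatchingMDST} (so that the starting maximum degree is $k = O(d\log n)$), and the cost of running \alg{MatchingMDST} itself is only $O((D+\sqrt n)\log^2 n)$ by Theorem~\ref{thm:bcast-aMDST}, it suffices to analyze \alg{Epochs} on such an input.

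First I would settle correctness. By Observation~\ref{obs:parimp}, the subgraph $\bar M$ produced inside each call $\alg{Improve}(\gamma,\gamma_0,q)$ is a valid $(\gamma-q,\gamma_0+q)$-improvement, so $T_{\bar M}=(T\setminus\bar M')\cup\bar M$ is again a tree spanning $V$; inducting over the (finitely many, see below) calls, \alg{Epochs} always maintains a spanning tree of $G$. Moreover, inside $\alg{Rehab}(z)$ we only ever invoke $\alg{Improve}(b_j,b_{j-2},z)$ with $j\ge 2$ and with $C_j=X_{b_j}$ nonempty, so $b_j\le k$; condition (b) of an improvement then guarantees that no edge swap raises any node's degree above $\gamma_0+q=b_{j-1}<b_j\le k$, i.e.\ the maximum degree is non-increasing throughout, which also justifies conservatively using the original $k$ in the time bound.

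Next I would bound the final degree. Within a single call $\alg{Rehab}(z)$, Lemma~\ref{lem:bestimp} shows that the phase chosen by the $\argmax$ rule drops the potential $w$ by a factor $\Omega(1/t)$ with $t=O(\log n)$ non-empty blocks; since $w\ge n$ always and $w$ is polynomially bounded at the start of an epoch, $\alg{Rehab}(z)$ terminates. At termination the $\argmax$ rule forces $C_j\sigma_j\le\max(C_0\sigma_0,C_1\sigma_1)$ for every $j\ge 1$, so Observation~\ref{obs:xxx} applies and the resulting tree has maximum degree at most $h+z\log_\tau n$. The outer loop \alg{Epochs} halves $z$ each time, reaching $z_i=1$ after $O(\log(k-h))$ epochs, so the final epoch runs $\alg{Rehab}(1)$ and yields maximum degree at most $h+\log_\tau n$. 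Since $h=h_c=\Theta(d)$ (Theorem~\ref{thm:sizem}, and $h_c=4d(1+O(\delta))$ in Lemma~\ref{lem:bestimp}) and $\tau=2/(1-\delta)>2$ is a fixed constant, $\log_\tau n=O(\log n)$, and the maximum degree is $O(d)+O(\log n)=O(d+\log n)$, as claimed.

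Finally, the round complexity of $O((D+\sqrt n)\log^4 n)$ is precisely Lemma~\ref{lem:epochs-time} specialized to $k=O(d\log n)$, with every phase implemented in broadcast-CONGEST via the component primitives of Section~\ref{sec:logn:3} and the \alg{Constrained-Matching} procedure of Appendix~\ref{app:cm}. The genuinely hard parts of this development are the lemmas I am invoking — the potential-drop Lemma~\ref{lem:bestimp} (hence termination and the degree control of \alg{Rehab}) and the $128$-approximation guarantee for \alg{Constrained-Matching}; given those, the theorem is an assembly, and the only new bookkeeping is noting that the per-epoch guarantee $h+z\log_\tau n$ decreases monotonically to $h+\log_\tau n$ as the epochs refine $z$ down to $1$, while the running times of all epochs sum (by the geometric-series estimate in Lemma~\ref{lem:epochs-time}) to $O((D+\sqrt n)\log^4 n)$.
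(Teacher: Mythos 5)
Your proof is correct and takes essentially the same route as the paper's (one-line) argument: the running time is Lemma~\ref{lem:epochs-time} applied with $k=O(d\log n)$, and the degree bound follows from Observation~\ref{obs:xxx} together with the fact that the final epoch runs \alg{Rehab}$(1)$ with $h=\Theta(d)$. The extra bookkeeping you supply (validity of each swap via Observation~\ref{obs:parimp}, monotonicity of the maximum degree, and termination of \alg{Rehab} via the potential drop of Lemma~\ref{lem:bestimp}) is consistent with, and fills in, what the paper leaves implicit.
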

\begin{proof}
The running time is direct from Lemma~\ref{lem:epochs-time}, and the degree bound is implied by Observation~\ref{obs:xxx} and the fact that \alg{Epochs} eventually calls \alg{Rehab} with a constant parameter. 
\end{proof}

%%%%%%% END OF SECTION %%%%%%

%%%%%%%%%%%%%%

%\input{planar} %% Applies to both algorithms
%\input{p2p}

\bibliographystyle{plainurl}% the recommnded bibstyle

\bibliography{refs}

\iflong
\appendix

\section{Algorithm for Constrained Matchings}
\label{app:cm}

We give here a randomized distributed algorithm in the broadcast-CONGEST model for finding near-maximum $(1,q)$-constrained matchings, running in $O(\log n)$ time.
The algorithm is based on finding approximate fractional matchings, viewing it as a flow in a shallow network.

\subsection{Algorithm}
% Flow representation
Constrained matchings correspond to flows in a related flow graph $F$.
The vertices of the flow graph $F$ consist of
the set $B$ of leaf bundles, the extreme nodes $s$ and $t$,
as well as the sets $U$ and $Q$ from the graph $H$.
There is a directed edge from $s$ to each leaf bundle node, from each leaf bundle to its
constituent leaf branches, from leaf branches to the nodes in $Q$ they
are adjacent to in $H$, and finally from each node in $Q$ to $t$. Edges from
$s$ to leaf bundles, and those from $Q$ to $t$, are of capacity $q$,
while the rest are of unit capacity.

% Flow vs. matchings
Observe that there is a one-one correspondence between (fractional)
$q$-constrained matchings in $H$ and flows in $F$.
Each edge in $H$ has a unique flow path in $F$ and vice versa.
Thus we may specify a flow in $F$ by giving only the flow on edges in $H$.
A maximal fractional matching corresponds to \emph{maximal flow},
where every $s-t$ path has some node that is \emph{saturated}, i.e., whose flow is at full capacity.

% Flow definitions
For a flow $f$, let $f(e)$ denote the flow through edge $e$, $f(v)$ denote the flow going out of $v$, and $v(f) = f(s)$ be the \emph{value} of the flow.
The \emph{size} of a fractional matching equals the value of the corresponding flow.
We say that a node is \emph{full} if it has incoming flow at least 1/8-th of its capacity.

%  Algorithm
The algorithm initially assigns each flow path a flow of $1/m$, where $m$ is the number of edges. In each round, every non-full node in $Q$ doubles the flow on its incident paths from non-full nodes in $U$ (through its parents in $B$) by sending a ``double the flow" message to its neighbors (full neighbors will ignore this message). By using \alg{Component-Broadcast} appropriately, it is easy to see that each leaf branch in $U$ can compute the total incident LP value, and so can every leaf bundle, so every leaf branch in $U$ knows whether it is full.  It sends this information to its neighbors in $Q$, which will then know which of its incident edges did actually double the flow.  Then $Q$ can begin the the next round of the algorithm.  After $O(\log n)$ rounds there will be no way of sending more flow using only non-full nodes (as we show in the next subsection), so after $O(\log n)$ rounds we more to the next part of the algorithm, where we use randomized rounding to find a constrained $(1,q)$-matching.  

In particular, we would like to do the following (from a centralized perspective).  We would first add every edge $e$ from $U$ to $Q$ to a set $S$ independently with probability $f(e)$.  Then any leaf branch in $U$ with more than one incident edge in $S$ removes all such edges from $S$, any leaf bundle with more than $q$ incident edges in $S$ removes all such edges from $S$, and any node in $Q$ with more than $q$ incident edges in $S$ removes all such edges from $S$.  This would by construction result in a constrained $(1,q)$-matching, which we call $S'$.  

In order to implement this in broadcast-CONGEST, we first have every vertex $v$ in each leaf branch in $U$ make the appropriate randomized decisions for the edges from $U$ to $Q$ that are incident on $v$, so every vertex $v$ in each leaf branch known which edges of $S$ are incident on it.  Note that this results in the same $S$ as in the centralized algorithm.  Now if $v$ added more than one incident edge to $S$, then it removes all of these edges from $S$.  Otherwise, if $v$ added exactly one edge to $S$, it broadcasts the identity of this edge to all of its neighbors (and in particular the other endpoint of the added edge) as well as using \alg{Component-Broadcast} to send the identity of the edge to the rest of the leaf branch containing it.  If in some leaf branch in $U$ there are multiple \alg{Component-Broadcast} instances occurring, then any node which detects this sends a ``failure" message through the branch (using another \alg{Component-Broadcast}), and all nodes in the branch remove all of their incident nodes which were in $S$ from $S$.  

Now each leaf branch has either $0$ or $1$ incident edge in $S$, and all nodes in the branch know the identity of this edge (if it exists).  The root of each leaf branch sends to its parent the identity of this edge.  So now the root of each leaf \emph{bundle} knows the edges incident on the bundle which are in $S$.  If there are more than $q$ such edges, then this bundle root removes them all by using \alg{Component-Broadcast} to send a message to all of the leaf branches in the bundle.  Similarly, each node in $Q$ knows all of its incident edges that are in $S$, and if there are more than $q$ of them then it removes all of them from $S$ by broadcasting a failure message to its neighbors.  The edges which survive this process are $S'$, and it is easy to see that it is precisely the same set as would have been computed in the centralized version.

This completes the description of the algorithm.

\subsection{Analysis}
% Correctness & time complexity
Notice that the flow never exceeds one-fourth of the capacity of any edge or node,
since doubling takes only place when the flow is at or below one eighth of capacity.  Since the initial flow is $1/m$ on all paths, it takes at most $O(\log m) = O(\log n)$ rounds before the algorithm is not able to send any more flow using only non-full nodes.  Thus after $O((D+ \sqrt{n}) \log n)$ rounds of broadcast-CONGEST, we have computed flow values which are at least $1/8$ of a \emph{maximal} flow.  We now claim that any maximal flow is close to a maximum flow.  The \emph{depth} of a flow network is the length of the longest $s-t$ path, so in our flow network the depth is $4$.

\begin{lemma} \label{lem:flow-depth}
In any flow network of depth $d$, every maximal flow has value at least $1/d$ of the value of the maximum flow.
\end{lemma}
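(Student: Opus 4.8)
The plan is to compare a maximal flow $f$ with a maximum flow $f^*$ by charging the entire flow of $f^*$ onto the saturated elements of $f$. First I would record the structural consequence of the depth being a finite number $d$: the network is acyclic, so $f$ decomposes into a collection of $s$–$t$ path flows of total value $v(f)$, and after cancelling flow cycles (which changes nothing about its value) so does $f^*$; moreover every path appearing in either decomposition has at most $d$ edges, by definition of depth.

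Let $\mathcal{S}$ be the set of edges saturated by $f$, i.e. those carrying flow equal to their capacity $c(e)$. The key use of maximality is that every $s$–$t$ path of the network contains an element of $\mathcal{S}$; in particular each path in a decomposition of $f^*$ does. Summing the flow of $f^*$ over $\mathcal{S}$ then gives, on one side,
\[
\sum_{e \in \mathcal{S}} f^*(e) \;\ge\; v(f^*),
\]
since every $f^*$ path-flow of value $w$ contributes at least $w$ to at least one term; and on the other side, because $f^*$ obeys the capacities and the edges of $\mathcal{S}$ are saturated by $f$,
\[
\sum_{e \in \mathcal{S}} f^*(e) \;\le\; \sum_{e \in \mathcal{S}} c(e) \;=\; \sum_{e \in \mathcal{S}} f(e).
\]

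Finally I would bound the last quantity from above: decomposing $f$ into path-flows of total value $v(f)$, a path-flow of value $w$ contributes $w$ to $\sum_{e\in\mathcal{S}} f(e)$ once for each saturated edge it uses, and it uses at most $d$ edges in total, hence at most $d$ saturated ones; summing over all path-flows gives $\sum_{e\in\mathcal{S}} f(e) \le d\cdot v(f)$. Chaining the three displays yields $v(f^*) \le d\cdot v(f)$, which is the claim. There is no real obstacle here — it is a clean charging argument — so the only points needing care are (i) invoking the acyclic flow decomposition cleanly so that "each path has at most $d$ edges" is legitimately available, and (ii) the trivial bookkeeping change when capacities sit on nodes rather than edges: one replaces "saturated edge" by "saturated node" and notes a length-$d$ path has at most $d+1$ nodes (fewer if $s,t$ are uncapacitated), which only strengthens the bound.
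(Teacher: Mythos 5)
Your proof is correct and takes essentially the same approach as the paper: both arguments rest on the two facts that maximality forces every $s$--$t$ path to contain a saturated element, and that decomposing $f$ into paths of length at most $d$ gives $\sum_{e} f(e) \le d\cdot v(f)$. The only (cosmetic) difference is the middle step: the paper extracts a saturated $s$--$t$ cut and invokes weak duality to get $v(f^*) \le \mathrm{cap}(S,V\setminus S) \le \sum_e f(e)$, whereas you charge the path decomposition of $f^*$ directly onto the set of saturated edges; both routes chain to $v(f^*) \le d\cdot v(f)$.
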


\begin{proof}
The depth constraint implies that $v(f) \cdot d \ge \sum_{e \in F} f(e)$; namely, since each flow path is of length at most $d$, each unit of flow is counted at most $d$ times in the sum.
Maximality means that there is an $s-t$ cut $(S,V-S)$ such that
all edges in $F$ that go from $S$ to $V-S$ are at full capacity (in $f$).
This implies that $\sum_{v \in S} f(v) \ge cap(S,V-S)$, where $cap(S,V-S)$ is the sum of the edge capacities across the cut. 
Observe that $\sum_{v \in S} f(v) \le \sum_{v \in V \setminus \{t\}} f(v) = \sum_{e \in F} f(e)$.
The capacity constraints imply that $cap(S,V-S) \ge v(f^*)$, 
where $f^*$ is a maximum flow. Combined, we have that $v(f) \ge v(f^*)/d$.
\end{proof}

\begin{corollary} \label{cor:flow-value}
$v(f) \geq \frac{1}{32} |M|$ for every constrained $(1,q)$-matching $M$ of $H$.
\end{corollary}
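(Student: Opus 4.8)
\textbf{Proof plan for Corollary~\ref{cor:flow-value}.}
The plan is to chain together three facts, each already available: the matching/flow correspondence, the depth bound of Lemma~\ref{lem:flow-depth}, and the guarantee that the algorithm's flow is within a constant factor of maximal. First I would observe that any constrained $(1,q)$-matching $M$ of $H$ is in particular a (fractional, indeed integral) constrained matching, so by the stated one-to-one correspondence it maps to a feasible flow in $F$ whose value equals $|M|$. Consequently the maximum flow $f^*$ in $F$ satisfies $v(f^*) \ge |M|$.

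Next I would invoke Lemma~\ref{lem:flow-depth}. The flow network $F$ has depth exactly $4$ (every $s$--$t$ path visits $s$, a leaf bundle, a leaf branch, a node of $Q$, and $t$, giving four edges), so the lemma says every maximal flow $\hat f$ in $F$ has $v(\hat f) \ge v(f^*)/4 \ge |M|/4$. Then I would use the fact established at the start of the analysis subsection: once no more flow can be pushed using only non-full nodes, the resulting flow $f$ is within a factor $1/8$ of a maximal flow, i.e.\ $v(f) \ge v(\hat f)/8$. (The point is that any $s$--$t$ path along which the algorithm has stopped augmenting contains a full node, and a full node carries at least $1/8$ of its capacity, so saturating those nodes would increase the value by at most a factor $8$.) Combining, $v(f) \ge \tfrac18 \cdot \tfrac14 \cdot |M| = \tfrac{1}{32}|M|$, which is the claim.

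I do not expect a genuine obstacle here; the only point requiring a moment of care is making precise why the halting condition of the flow-pushing phase yields a flow that is a constant-factor approximation of a \emph{maximal} flow (as opposed to simply "a flow with no augmenting path"), and to note that this is exactly where the constant $8$ — hence the $32$ in the statement — enters. Everything else is a direct substitution into Lemma~\ref{lem:flow-depth} with $d=4$.
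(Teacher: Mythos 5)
Your proposal is correct and follows essentially the same chain as the paper's proof: the bijection between constrained $(1,q)$-matchings and integral flows gives $v(f^*)\ge |M|$, Lemma~\ref{lem:flow-depth} with depth $4$ relates maximal to maximum flow, and the fullness threshold of $1/8$ relates the algorithm's flow to a maximal one, yielding the factor $32$. The one point you flag as needing care (why stopping with every path blocked by a $1/8$-full node loses only a factor $8$) is treated just as briefly in the paper, so there is no substantive difference.
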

\begin{proof}
Since in $f$ every $s-t$ path contains at least one full node, $v(f)$ is at least $1/8$ of the the value of any maximal flow, and so by Lemma~\ref{lem:flow-depth} we know that $v(f)$ is at least $1/32$ of the value of the maximum flow.  As discussed, there is a bijection between the integral flows in $F$ and constrained $(1,q)$-matchings in $H$, so this implies that $v(f)$ is at least $1/32$ times the size of the maximum constrained $(1,q)$-matching in $F$. 
\end{proof}

By construction $S'$ is a feasible constrained $(1,q)$-matching, so we just need to show that it has large value.  To do this, we will relate it to $v(F)$

\begin{lemma} \label{lem:S-size}
$\E[|S'|] \ge v(f)/4$.
\end{lemma}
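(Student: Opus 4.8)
The plan is to analyze the rounding edge by edge and combine it with linearity of expectation, using the slack already observed in the analysis: the flow $f$ never exceeds one fourth of the capacity of any edge or node of $F$. First I would record two basic identities. Since each edge $e$ of $H$ carries flow exactly $f(e)$ and corresponds to a unique $s$--$t$ path, flow conservation gives $v(f) = \sum_{e\in E(H)} f(e)$; and since $S$ is obtained by including each $e$ independently with probability $f(e)$, linearity gives $\E[|S|] = \sum_{e\in E(H)} f(e) = v(f)$. Hence it suffices to prove that each $e\in S$ survives the three clean-up steps with conditional probability at least $1/4$: then $\Pr[e\in S'] \ge \tfrac14 f(e)$ for every $e$, and summing yields $\E[|S'|] \ge v(f)/4$. (A one-line remark will note that the distributed procedure produces exactly the set $S'$ described centrally, as already asserted in the algorithm description, so we may reason about the centralized description.)

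Next I would isolate a sufficient condition for an edge $e=\{u,w\}$, with $u$ in a leaf branch $B$ and $w\in Q$, to remain in $S'$: it is enough that, in the original random set $S$, (i) $B$ has no incident edge other than $e$, (ii) the leaf bundle containing $B$ has at most $q$ incident edges, and (iii) $w$ has at most $q$ incident edges. This is because every clean-up step only deletes edges: if conditions (i)--(iii) already hold for $S$, then after any deletions the relevant incidence counts have only decreased, so no step is triggered at $B$, at its bundle, or at $w$, and $e$ is never removed. Therefore $\Pr[e\in S'] \ge \Pr[e\in S]\cdot \Pr[(i)\wedge(ii)\wedge(iii)\mid e\in S]$.

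Then I would bound the failure probabilities of (i)--(iii) by Markov's inequality, conditioning on $e\in S$ (which, by independence, leaves the distribution of all other inclusions unchanged). Let $X_1$ be the number of edges of $S$ other than $e$ incident to $B$. The bundle-to-branch edge has capacity $1$, so $\E[X_1\mid e\in S] = \sum_{e'\neq e,\ e'\ni B} f(e') \le 1/4$, giving $\Pr[X_1\ge 1]\le 1/4$. Let $X_2$ be the number of edges of $S$ other than $e$ incident to the leaf bundle of $B$, and $X_3$ the number of edges of $S$ other than $e$ incident to $w$. The $s$-to-bundle and $w$-to-$t$ edges have capacity $q$, so $\E[X_2\mid e\in S]\le q/4$ and $\E[X_3\mid e\in S]\le q/4$, whence $\Pr[X_2\ge q]\le 1/4$ and $\Pr[X_3\ge q]\le 1/4$. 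A union bound over these three events gives $\Pr[(i)\wedge(ii)\wedge(iii)\mid e\in S]\ge 1-3/4 = 1/4$, which completes the proof.

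I do not expect a real obstacle here; the only care needed is the bookkeeping in the second step — verifying that "clean in the original $S$" is preserved through the sequential clean-up (branches, then bundles, then $Q$-nodes), since each step can only shrink the incidence counts and hence cannot spoil conditions (ii) or (iii), and matching this with the precise distributed implementation. Everything else is linearity of expectation together with three applications of Markov's inequality, exploiting the factor-$1/4$ capacity slack guaranteed by the doubling rule.
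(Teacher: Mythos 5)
Your proposal is correct and follows essentially the same route as the paper's proof: for each edge $e$ you lower-bound the survival probability by $f(e)/4$ via three bad events (other edge on the branch, $\ge q$ others on the bundle, $\ge q$ others on the $Q$-endpoint), each bounded by $1/4$ using the quarter-of-capacity slack and Markov's inequality, then apply a union bound and linearity of expectation together with $v(f)=\sum_{e\in E(H)}f(e)$. The only cosmetic difference is that you condition on $e\in S$ and spell out the monotonicity of the sequential clean-up, whereas the paper invokes independence of $e$'s inclusion from the bad events and asserts the survival criterion directly.
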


\begin{proof}
Consider an edge $e=(u,v)$ in $H$ (i.e., from $U$ to $Q$).
Let $A_e$ be the event that some other edge incident on $u$ is added to $S$.
Let $B_e$ be the event that $q$ or more edges are added 
that have endpoints in the same bundle as $e$ but not $u$.
Finally, let $C_e$ be the event that $q$ or more other edges incident on $v$ were added to $S$.
Observe that if $e$ was added to $S$, then it will remain in $S'$
if none of the three events ($A_e, B_e$ and $C_e$) take place.
Using that the flow is at most one-fourth of capacity,
\[ \Pr[A_e] = 1 - \prod_{e' \ni u, e' \neq e} (1-f(e')) 
  \le \sum_{e' \ni u} f(e')
  \le 1/4\ . \]
Let $Y$ be the number of edges in $S$ incident on the same bundle as $e$.
Then, $\E[Y] \le q/4$.
So, by Markov's inequality,
\[ \Pr[B_e] \le \Pr[Y \ge 4 \E[Y]] \le 1/4\ . \]
Similarly, $\Pr[C_e] \le 1/4$.
By the union bound,
\[ \Pr[A_e \cup B_e \cup C_e] \le 3/4\ .\]
Thus, the event $X_e$ that edge $e$ is contained in $S'$ has probability 
\[ \Pr[X_e] \ge f(e) \cdot (1 - \Pr[A_e \cup B_e \cup C_e]) = f(e)/4\ , \]
since the event of $e$ being chosen in $S$ is independent from the three bad events.
Thus, by linearity of expectation, 
\[ \E[|S']] = \sum_{e \in E(H)} \Pr[X_e] \ge \frac{1}{4} \sum_{e \in E(H)} f(e) = v(f)/4 \ . \qedhere \] 
\end{proof}

\begin{theorem}
The algorithm finds a $128$-approximate constrained $(1,q)$-matching (in expectation) in time $O((D+\sqrt{n})\log n)$.
\end{theorem}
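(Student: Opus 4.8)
The plan is to assemble the ingredients already developed in this subsection; the theorem itself is essentially a combination of Corollary~\ref{cor:flow-value}, Lemma~\ref{lem:S-size}, and a careful accounting of the round complexity, so I would not prove anything substantially new. First I would bound the running time. The algorithm has two parts: the fractional-flow computation and the randomized rounding. For the flow part, recall that every one of the $m$ flow paths is initialized with flow $1/m$, and in each round every non-full node in $Q$ doubles the flow on its eligible incident paths. Since a node becomes ``full'' once its incoming flow reaches $1/8$ of its capacity, the flow on any fixed path can be doubled at most $\log_2(m/8) = O(\log n)$ times before one of its endpoints is full; hence after $O(\log n)$ rounds no $s$--$t$ path has all of its internal nodes non-full, so the process halts. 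Each round consists of a constant number of message exchanges between $U$ and $Q$ together with a constant number of \alg{Component-Broadcast} instances inside each leaf branch and leaf bundle (so that each branch/bundle can compute its total incident LP value and decide whether it is full); by the analysis of the component primitives in Section~\ref{sec:logn:3}, a round costs $O(D+\sqrt{n})$ rounds of broadcast-CONGEST. Thus the flow phase takes $O((D+\sqrt{n})\log n)$ rounds. The rounding phase is a constant number of local broadcasts and \alg{Component-Broadcast} instances (choosing $S$, then pruning leaf branches, then leaf bundles, then $Q$-nodes), so it adds only $O(D+\sqrt{n})$ rounds, for a total of $O((D+\sqrt{n})\log n)$.

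Next I would establish the approximation guarantee. By construction, the pruning steps guarantee that in $S'$ each leaf branch keeps at most one incident edge, each leaf bundle keeps at most $q$, and each node of $Q$ keeps at most $q$, so $S'$ is a feasible constrained $(1,q)$-matching of $H$. Now fix an arbitrary constrained $(1,q)$-matching $M$ of $H$. Corollary~\ref{cor:flow-value} gives $v(f) \ge |M|/32$ for the computed flow $f$, and Lemma~\ref{lem:S-size} gives $\E[|S'|] \ge v(f)/4$. Chaining these yields $\E[|S'|] \ge v(f)/4 \ge |M|/128$. Taking $M$ to be a maximum constrained $(1,q)$-matching shows $\E[|S'|] \ge \mathrm{OPT}/128$, i.e.\ $S'$ is a $128$-approximation in expectation, which together with the time bound above proves the theorem.

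The one point I would treat carefully — and which I expect to be the main obstacle — is justifying that when the doubling process halts the flow $f$ is genuinely within a constant factor of a \emph{maximal} flow, so that Corollary~\ref{cor:flow-value} (which is stated for maximal flows via Lemma~\ref{lem:flow-depth}) applies. The argument is that the process halts exactly when every $s$--$t$ path contains a full node, and a full node carries at least $1/8$ of its capacity; scaling $f$ up so that, on each $s$--$t$ path, the most-loaded full node becomes saturated produces a maximal flow whose value exceeds $v(f)$ by at most a factor $8$ (this is where one also needs that $f$ never exceeds $1/4$ of any capacity, so the scaling is well defined). Everything else is bookkeeping over the lemmas already proved.
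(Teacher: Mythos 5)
Your proposal is correct and follows essentially the same route as the paper: the theorem is just the chaining of Corollary~\ref{cor:flow-value} and Lemma~\ref{lem:S-size} to get $\E[|S'|] \ge v(f)/4 \ge OPT/128$, plus the round-complexity bookkeeping ($O(\log n)$ doubling iterations at $O(D+\sqrt{n})$ each, and $O(D+\sqrt{n})$ for the rounding). The ``main obstacle'' you flag at the end is already handled inside the proof of Corollary~\ref{cor:flow-value} (every $s$--$t$ path containing a full node gives $v(f)$ at least $1/8$ of any maximal flow, and Lemma~\ref{lem:flow-depth} with depth $4$ does the rest), so no separate scaling argument is needed --- and note that literally scaling $f$ by $8$ would not yield a feasible flow, since $f$ can be as large as $1/4$ of capacity.
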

\begin{proof}
  Lemma~\ref{lem:S-size} and Corollary \ref{cor:flow-value} imply that $\E[|S'|] \ge v(f)/4 \ge OPT/128$, where $OPT$ is the size of an optimal $q$-constrained matching.  So the algorithm returns a 128-approximation to the maximum constrained $(1,q)$-matching.  For the running time, we have already argued that computing $f$ takes at most $O((D+\sqrt{n})\log n)$ rounds.  Computing $S'$ from $f$ clearly takes at most $O(D+\sqrt{n})$ rounds, since it simply involves a constant number of \alg{Component-Broadcast} calls in each leaf branch and bundle.  Thus the total running time is $O((D + \sqrt{n})\log n)$.  
\end{proof}

\fi

\end{document}

%%% Local Variables: 
%%% mode: latex
%%% TeX-master: t
%%% End: 